\newtheorem{problem}{Problem}
\newtheorem{definition}{Definition}
\newtheorem{theorem}{Theorem}
\newtheorem{lemma}{Lemma}
\newtheorem{remark}{Remark}
\begin{document}
\title{\LARGE \bf Feedback Interconnected Mean-Field Density Estimation and Control}

\author{Tongjia Zheng$^{1}$, Qing Han$^{2}$ and Hai Lin$^{1}$
\thanks{*This work was supported by the National Science Foundation under Grant No. IIS-1724070, CNS-1830335, IIS-2007949.}
\thanks{$^{1}$Tongia Zheng and Hai Lin are with the Department of Electrical Engineering, University of Notre Dame, Notre Dame, IN 46556, USA. {\tt\small  tzheng1@nd.edu, hlin1@nd.edu.}} 
\thanks{$^{2}$Qing Han is with the Department of Mathematics, University of Notre Dame, Notre Dame, IN 46556, USA. {\tt\small Qing.Han.7@nd.edu.}} 
}

\maketitle

\begin{abstract}
Swarm robotic systems have foreseeable applications in the near future.
Recently, there has been an increasing amount of literature that employs mean-field partial differential equations (PDEs) to model the time-evolution of the probability density of swarm robotic systems and uses density feedback to design stabilizing control laws that act on individuals such that their density converges to a target profile.
However, it remains largely unexplored considering problems of how to estimate the mean-field density, how the density estimation algorithms affect the control performance, and whether the estimation performance in turn depends on the control algorithms.
In this work, we focus on studying the interplay of these algorithms.
Specifically, we propose new density control laws which use the mean-field density and its gradient as feedback, and prove that they are globally input-to-state stable (ISS) with respect to estimation errors.
Then, we design filtering algorithms to estimate the density and its gradient separately, and prove that these estimates are convergent assuming the control laws are known.
Finally, we show that the feedback interconnection of these estimation and control algorithms is still globally ISS, which is attributed to the bilinearity of the PDE system.
An agent-based simulation is included to verify the stability of these algorithms and their feedback interconnection.
\end{abstract}

% \begin{keywords}

% \end{keywords}

\section{Introduction}
Swarm robotic systems (such as drones) have foreseeable applications in the near future.
Compared with small-scale robotic systems, the dramatic increase in the number of involved robots provides numerous advantages such as robustness, efficiency, and flexibility, but also poses significant challenges to their estimation and control problems.

Many methods have been proposed for controlling large-scale systems, such as graph theoretic design \cite{olfati2007consensus} and game theoretic formulation, especially potential games \cite{marden2009cooperative} and mean-field games \cite{huang2006large}.
Our work is also inspired by modelling and control strategies based on mean-field limit approximations. 
However, unlike mean-field games where the mean-field density is used to approximate the collective effect of the swarm, we aim at the direct control of the mean-field density, which results in a PDE control problem.
Mean-field models include Markov chains and PDEs.
The first category partitions the spatial domain to obtain an abstracted Markov chain model and designs the transition probability to stabilize the density \cite{accikmecse2012markov, bandyopadhyay2017probabilistic}, which usually suffers from the state explosion issue.
In PDE-based models, individual robots are modelled by a family of stochastic differential equations and their density evolves according to a PDE.
In this way, the density control problem of a robotic swarm is converted as a regulation problem of the PDE.

Considering density control, early efforts usually adopt an optimal control formulation \cite{elamvazhuthi2015optimal}.
While an optimal control formulation provides more flexibility for the control objective, the solution relies on numerically solving the optimality conditions, which are computationally expensive and essentially open-loop.
Density control is also studied in \cite{chen2015optimal1, ridderhof2019nonlinear} by relating density control problems with the so-called Schr\"odinger Bridge problem.
However, numerically solving the associated Schr\"odinger Bridge problem is also known to suffer from the curse of dimensionality except for the linear case.
In recent years, researchers have sought to design control laws that explicitly use the mean-field density as feedback to form closed-loop control \cite{eren2017velocity, elamvazhuthi2018bilinear, zheng2021transporting}.
This density feedback technique is able to guarantee closed-loop stability and can be efficiently computed since it is given in a closed form.
However, it remains largely unexplored considering problems of how to estimate the mean-field density, how the density estimation algorithm affects the control performance, and whether the estimation performance in turn depends on the density control algorithms.
These problems become more critical as it is observed that most density feedback design more or less depends on the gradient of the mean-field density.
Since gradient is an unbounded operator, any density estimation algorithm that produces accurate density estimates may have arbitrarily large estimation errors for the gradient.
This brings significant concerns to the density estimation problem. 

In this work, we study the interplay of density feedback laws and estimation algorithms.
In \cite{zheng2021transporting}, we have proposed some density feedback laws and obtained preliminary results on their robustness to density estimation errors.
This work extends these feedback laws so that they are less restrictive and have more verifiable robustness properties in the presence of estimation errors. 
We have also reported density filtering algorithms in \cite{zheng2020pde, zheng2021distributedmean} which are designed for large-scale stochastic systems modelled by mean-field PDEs.
This work extends the algorithms to directly estimate the gradient of the density, a quantity required by almost all existing density feedback control in the literature. 
Furthermore, we study the interconnection of these estimation and control algorithms and prove their closed-loop stability.
% Although the design of the feedback velocity field is in a centralized manner, the implementation can be fully distributed in the sense that each robot can independently derive its own low-level controller according to the feedback velocity command. 
% Such a control strategy is suitable for scenarios when a communication center is available, such as UAV-based environment surveillance.

Our contribution includes three aspects.
First, we propose new density feedback laws and show their robustness using the notion of ISS.
Second, we design infinite-dimensional filters to estimate the gradient of the density and study their stability and optimality.
Third, we prove that the feedback interconnection of these estimation and control algorithms is still globally ISS.

The rest of the paper is organized as follow. Section \ref{section:preliminaries} introduces some preliminaries. 
Problem formulation is given in Section \ref{section:problem formulation}. 
Section \ref{section:main results} is our main results in which we propose new density estimation and control laws, and study their interconnected stability. 
Section \ref{section:simulation} presents an agent-based simulation to verify the effectiveness.

\section{Preliminaries}
\label{section:preliminaries}

\subsection{Notations}\label{section:notation}
For a vector $x\in\mathbb{R}^n$, its Euclidean norm is denoted by $\|x\|$.
% Since norms for finite-dimensional vectors are equivalent, we will drop the subscription $p$.
Let $E\subset\mathbb{R}^n$ be a measurable set. 
For $f:E\to\mathbb{R}$, its $L^2$-norm is denoted by $\|f\|_{L^2(E)}:=(\int_{E}|f(x)|^{2}dx)^{1/2}$.
Given $g(x)$ with $0<\inf_{x\in E}g(x)\leq\sup_{x\in E}g(x)<\infty$, the weighted norm $\|f\|_{L^p(E;g)}:=(\int_{E}|f(x)|^{p}g(x)dx)^{1/p}$ is equivalent to $\|f\|_{L^p(E)}$.
We will omit $E$ in the notation when it is clear.
The gradient and Laplacian of a scalar function $f$ are denoted by $\nabla f$ and $\Delta f$, respectively
The divergence of a vector field $F$ is denoted by $\nabla\cdot F$. 
% The differentiation operation of these operators are only taken with respect to the spatial variable $x$ if $f$ and $\boldsymbol{F}$ are also functions of $t$.

% The following lemma is a consequence of Proposition 4.2.7 and Theorem 4.6.3 in \cite{bakry2013analysis}.

\begin{lemma}[Poincar\'e inequality for density functions \cite{bakry2013analysis}] 
\label{lmm:Poincare inequality}
Let $\Omega\in\mathbb{R}^{n}$ be a bounded convex open set with a Lipschitz boundary.
Let $g$ be a continuous density function on $\Omega$ such that $0<c_1\leq\inf g\leq\sup g\leq c_2$ for some constants $c_1$ and $c_2$.
Then, $\exists$ a constant $C>0$ such that for $\forall f\in H^1(\Omega)$,
\begin{equation}\label{eq:poincare inequality}
    \int_\Omega|\nabla f|^2gdx\geq C\int_\Omega\left|f-\int_\Omega fgdx\right|^2gdx.
\end{equation}
\end{lemma}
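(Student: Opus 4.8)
The plan is to reduce the weighted inequality \eqref{eq:poincare inequality} to the classical \emph{unweighted} Poincar\'e--Wirtinger inequality on $\Omega$, using the two-sided bound $c_1\le g\le c_2$ to pass between weighted and unweighted $L^2$ norms, and using the elementary fact that the weighted mean is the best constant approximation of $f$ in $L^2(\Omega;g)$.

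First I would invoke the classical Poincar\'e--Wirtinger inequality: since $\Omega$ is bounded, convex, and has Lipschitz boundary (hence in particular connected), there is a constant $C_P=C_P(\Omega)>0$ with
\[
\int_\Omega|f-\bar f|^2\,dx\le C_P\int_\Omega|\nabla f|^2\,dx,\qquad \bar f:=\tfrac{1}{|\Omega|}\textstyle\int_\Omega f\,dx,\quad \forall f\in H^1(\Omega);
\]
convexity even yields the explicit choice $C_P=(\operatorname{diam}\Omega/\pi)^2$ (Payne--Weinberger), though only $C_P<\infty$ is needed. Next, since $g$ is a probability density ($\int_\Omega g\,dx=1$, which is what makes $\int_\Omega fg\,dx$ the weighted mean of $f$), the map $a\mapsto\int_\Omega|f-a|^2g\,dx$ is a quadratic polynomial in $a$ minimized at $a=\int_\Omega fg\,dx$; in particular, taking $a=\bar f$,
\[
\int_\Omega\Big|f-\textstyle\int_\Omega fg\,dx\Big|^2 g\,dx\ \le\ \int_\Omega|f-\bar f|^2 g\,dx .
\]

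Combining this with the norm comparison $c_1\|h\|_{L^2}^2\le\|h\|_{L^2(g)}^2\le c_2\|h\|_{L^2}^2$ (valid for any $h\in L^2(\Omega)$ by $c_1\le g\le c_2$) and the Poincar\'e inequality above yields
\[
\int_\Omega\Big|f-\textstyle\int_\Omega fg\,dx\Big|^2 g\,dx
\le c_2\int_\Omega|f-\bar f|^2\,dx
\le c_2C_P\int_\Omega|\nabla f|^2\,dx
\le \frac{c_2C_P}{c_1}\int_\Omega|\nabla f|^2 g\,dx,
\]
so \eqref{eq:poincare inequality} holds with $C=c_1/(c_2C_P)$ (equivalently $C=c_1\pi^2/(c_2\operatorname{diam}(\Omega)^2)$).

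I do not expect a serious obstacle here: the whole content lies in citing the classical Poincar\'e constant correctly --- the hypotheses on $\Omega$ (bounded, Lipschitz boundary, connected) are exactly those that guarantee $C_P<\infty$, with convexity giving the sharp diameter-dependent value --- and in the one-line identification of $\int_\Omega fg\,dx$ as the minimizer of $a\mapsto\int_\Omega|f-a|^2g\,dx$. The only point requiring care is the bookkeeping of the inequality directions, so that the bounds $c_1\le g\le c_2$ are each applied on the side that makes them work in our favor.
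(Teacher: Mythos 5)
Your proof is correct. Note that the paper itself offers no proof of this lemma --- it is quoted directly from \cite{bakry2013analysis} --- so there is no in-paper argument to compare against; your reduction to the classical Poincar\'e--Wirtinger inequality (via the two-sided bound $c_1\le g\le c_2$ to pass between weighted and unweighted $L^2$ norms, plus the fact that the weighted mean minimizes $a\mapsto\int_\Omega|f-a|^2g\,dx$) is the standard route and is complete, yielding the explicit constant $C=c_1/(c_2C_P)$. The one hypothesis you rightly flag should be kept explicit: the identification of $\int_\Omega fg\,dx$ as the weighted mean requires $\int_\Omega g\,dx=1$, which is what ``density function'' means here (in the paper $g=p_*$ is a probability density); without that normalization the stated inequality already fails for constant $f$, so your reading is the intended one.
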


\subsection{Input-to-state stability}
% Input-to-state stability (ISS) is a stability notion to study nonlinear control systems with external inputs \cite{sontag1989smooth}. 
We introduce ISS for infinite-dimensional systems \cite{dashkovskiy2013input}.
Define the following classes of comparison functions:
\begin{align*}
    \mathscr{P} &:=\{\gamma:\mathbb{R}_+\to\mathbb{R}_+|\gamma\text{ is continuous, }\gamma(0)=0,\\
    &\qquad\text{ and }\gamma(r)>0\text{ for }r>0\}\\
    \mathscr{K} &:=\{\gamma\in\mathscr{P}\mid\gamma\text{ is strictly increasing}\} \\
    \mathscr{K}_\infty &:=\{\gamma\in\mathscr{K}\mid\gamma\text{ is unbounded}\}\\
    \mathscr{L} &:=\{\gamma:\mathbb{R}_+\to\mathbb{R}_+\mid\gamma\text{ is continuous and strictly}\\
    &\quad\quad\text{decreasing with }\lim_{t\to\infty}\gamma(t)=0\}\\
    \mathscr{KL} &:=\{\beta:\mathbb{R}_+\times\mathbb{R}_+\to\mathbb{R}_+\mid\beta(\cdot,t)\in\mathscr{K},\forall t\geq0,\\
    &\qquad\beta(r,\cdot)\in\mathscr{L},\forall r>0\}.
\end{align*}

Let $\left(X,\|\cdot\|_X\right)$ and $\left(U,\|\cdot\|_{U}\right)$ be the state and input space, endowed with norms $\|\cdot\|_X$ and $\|\cdot\|_{U}$, respectively.
Denote $U_c=PC(\mathbb{R}_+;U)$, the space of piecewise right-continuous functions from $\mathbb{R}_+$ to $U$, equipped with the sup-norm.
Consider a control system $\Sigma=(X,U_c,\phi)$ where $\phi: \mathbb{R}_+\times X\times U_c\to X$ is a transition map.
Let $x(t)=\phi(t,x_0,u)$.
%the system state at time $t\in\mathbb{R}_+$, if its state at time $s\in\mathbb{R}_+$ was $x \in X$ and the input $u \in U_c$ was applied.

\begin{definition} \label{dfn:(L)ISS}
$\Sigma$ is called \textit{input-to-state stable (ISS)}, if $\exists\beta\in\mathcal{KL},\gamma\in\mathcal{K}$, such that
\begin{equation*}\label{eq:(L)ISS}
    \|x(t)\|_X\leq\beta(\|x_0\|_X, t)+\gamma\Big(\sup_{0\leq s\leq t}\|u(s)\|_{U}\Big),
\end{equation*}
$\forall x_0\in X,\forall u\in U_c$ and $\forall t\geq0$.
% It is called \textit{input-to-state stable (ISS)}, if $\rho_x=\infty$ and $\rho_u=\infty$.
\end{definition} 

% To verify the ISS property, Lyapunov functions can be exploited.

\begin{definition}\label{dfn:(L)ISS-Lyapunov function}
A continuous function $V:\mathbb{R}_+\times X\to\mathbb{R}_+$ is called an \textit{ISS-Lyapunov function} for $\Sigma$, if $\exists\psi_{1},\psi_{2}\in\mathcal{K}_\infty,\chi\in\mathcal{K}$, and $W\in\mathcal{P}$, such that:
\begin{itemize}
    \item[(i)] $\psi_1(\|x\|_X)\leq V(t,x)\leq\psi_2(\|x\|_X), ~\forall x\in X$
    \item[(ii)] $\forall x\in X,\forall u\in U_c$ with $u(0)=\xi\in U$ it holds:
    \begin{equation*}
        \|x\|_X\geq\chi(\|\xi\|_U) \Rightarrow \dot{V}(t,x)\leq-W(\|x\|_X).
    \end{equation*}
\end{itemize}
% If $D=X, \rho_x=\infty$ and $\rho_u=\infty,$ then the function $V$ is called an \textit{ISS-Lyapunov function}.
\end{definition}

\begin{theorem}\label{thm:(L)ISS-Lyapunov function}
% Let $\Sigma=(X,U_c,\phi)$ be a control system, and $x\equiv0$ be its equilibrium point.
If $\Sigma$ admits an ISS-Lyapunov function, then it is ISS.
\end{theorem}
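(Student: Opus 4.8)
The plan is to reduce the infinite-dimensional problem to a scalar differential inequality along trajectories and then invoke the standard comparison-function machinery. Fix $x_0\in X$ and $u\in U_c$, write $x(t)=\phi(t,x_0,u)$ and $v(t):=V(t,x(t))$, abbreviate $\|u\|_t:=\sup_{0\le s\le t}\|u(s)\|_U$, and set $\rho:=\psi_2\circ\chi\in\mathcal{K}$. The starting observation is that, in the sense of the upper-right Dini derivative along the flow, $v(t)\ge\rho(\|u\|_t)$ forces $\|x(t)\|_X\ge\chi(\|u\|_t)\ge\chi(\|u(t)\|_U)$ — because $v(t)\le\psi_2(\|x(t)\|_X)$ and $\psi_2$ is increasing — so Definition \ref{dfn:(L)ISS-Lyapunov function}(ii) gives $\dot v(t)\le-W(\|x(t)\|_X)\le0$.

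First I would extract an a priori bound: since $\dot v(t)<0$ strictly whenever $v(t)$ exceeds $\max\{v(0),\rho(\|u\|_t)\}$, a routine contradiction argument yields $v(t)\le\max\{v(0),\rho(\|u\|_t)\}$ for all $t$, hence $\|x(t)\|_X\le\psi_1^{-1}(v(t))$ stays in a fixed bounded set $[0,R]$ (this also delivers forward completeness), and once $v$ meets the level $\rho(\|u\|_t)$ it never rises above it again. On the compact range $[0,R]$ the merely positive-definite rate $W$ admits a lower bound by a positive-definite function of $v$: put $\alpha(s):=\min\{W(\tau):\psi_2^{-1}(s)\le\tau\le R\}>0$ for $s>0$, so that $v(t)\ge\rho(\|u\|_t)$ now implies $\dot v(t)\le-\alpha(v(t))$. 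Applying the scalar comparison lemma for $\dot y\le-\alpha(y)$ produces a $\tilde\beta\in\mathcal{KL}$ with $v(t)\le\tilde\beta(v(0),t)$ while $v(t)\ge\rho(\|u\|_t)$, and $v(t)\le\rho(\|u\|_t)$ afterwards; decomposing the time axis at the first instant $v$ meets that level gives $v(t)\le\tilde\beta(v(0),t)+\rho(\|u\|_t)$ for all $t$. Finally, unwinding Definition \ref{dfn:(L)ISS-Lyapunov function}(i) via $\psi_1(\|x(t)\|_X)\le v(t)$, $v(0)\le\psi_2(\|x_0\|_X)$, and $\psi_1^{-1}(a+b)\le\psi_1^{-1}(2a)+\psi_1^{-1}(2b)$, one obtains $\|x(t)\|_X\le\beta(\|x_0\|_X,t)+\gamma(\|u\|_t)$ with $\beta(r,t):=\psi_1^{-1}(2\tilde\beta(\psi_2(r),t))\in\mathcal{KL}$ and $\gamma(r):=\psi_1^{-1}(2\rho(r))\in\mathcal{K}$, which is exactly Definition \ref{dfn:(L)ISS}.

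I expect the main obstacle to be precisely the step where $W\in\mathcal{P}$ — not monotone, possibly not radially unbounded — must be converted into a genuine comparison function $\alpha$: one cannot simply compose $W$ with $\psi_2^{-1}$, and the localization to $[0,R]$ used above makes $\alpha$ depend on the data $(x_0,\|u\|_t)$, so extra bookkeeping is needed to keep the resulting $\beta$ and $\gamma$ uniform in $x_0$ and $u$; this is the one genuinely delicate point, and is where I would follow the argument of \cite{dashkovskiy2013input}. The remaining ingredients — interpreting $\dot V$ as a Dini derivative along the merely continuous flow, the comparison principle in that generality, and forward completeness — are standard in the infinite-dimensional setting and fall out of the a priori estimate.
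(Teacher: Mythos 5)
The paper itself offers no proof of this theorem: it is imported as a known result from the cited reference \cite{dashkovskiy2013input}, and your argument is exactly the standard Lyapunov-to-ISS proof used there (and in the finite-dimensional Sontag--Wang/Khalil treatment): apply condition (ii) along the flow via the cocycle property with $\xi=u(t)$, derive the a priori bound $v(t)\le\max\{v(0),\rho(\|u\|_t)\}$, split time at the first crossing of the level $\rho(\|u\|_t)$, run the scalar comparison lemma, and unwind with $\psi_1,\psi_2$. The one point you flag---that $\alpha$ is built from $W\in\mathscr{P}$ by minimizing over $[\psi_2^{-1}(s),R]$ with $R$ depending on $(x_0,\|u\|_t)$---is real but closes in the standard way: on the interval before the first crossing the a priori bound already gives $v(s)\le v(0)\le\psi_2(\|x_0\|_X)$, so $R$ (hence $\alpha$ and the resulting $\tilde\beta$) can be taken to depend on $\|x_0\|_X$ only, while all $u$-dependence is absorbed into the gain term through the max-form estimate $v(t)\le\max\{\tilde\beta(v(0),t),\rho(\|u\|_t)\}$; assembling the family $\alpha_r$ into a single $\mathcal{KL}$ bound monotone in $r$ is routine bookkeeping. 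With that observation your proposal is correct and contains no genuine gap.
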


ISS is useful for studying the stability of cascade systems.
Consider two systems $\Sigma_i=(X_i,U_{ci},\phi_i),i=1,2$, where $U_{ci}=PC(\mathbb{R}_+;U_i)$ and $X_1\subset U_2$.
We say they form a cascade connection if $u_2(t)=\phi_1(t,t_0,\phi_{01},u_1)$.

\begin{theorem}\label{thm:(L)ISS cascade}
\cite{khalil2002nonlinear} The cascade connection of two ISS systems is ISS.
% If one of them is LISS, then the cascade connection is LISS.
\end{theorem}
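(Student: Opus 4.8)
The plan is to obtain the $\mathcal{KL}$-plus-$\mathcal{K}$ estimate of Definition~\ref{dfn:(L)ISS} for the composite system whose state is $x=(x_1,x_2)$ and whose external input is $u_1$, working directly from the two given ISS estimates and exploiting causality together with the cocycle (semigroup) property of the transition maps. Throughout, write $\|x\|_X=\|x_1\|_{X_1}+\|x_2\|_{X_2}$ and $\|u_1\|_{[0,t]}:=\sup_{0\le s\le t}\|u_1(s)\|_{U_1}$, and recall that in the cascade the signal driving $\Sigma_2$ is $u_2=x_1$, which is legitimate since $X_1\subset U_2$.

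\textbf{Step 1 (driving subsystem).} ISS of $\Sigma_1$ gives $\|x_1(t)\|_{X_1}\le\beta_1(\|x_{01}\|_{X_1},t)+\gamma_1(\|u_1\|_{[0,t]})$, and, by restarting $\Sigma_1$ at time $\tau$ and using monotonicity of the bound in $t$, $\sup_{\tau\le s\le t}\|x_1(s)\|_{X_1}\le\beta_1(\|x_{01}\|_{X_1},\tau)+\gamma_1(\|u_1\|_{[0,t]})$ for every $0\le\tau\le t$. \textbf{Step 2 (split the horizon and restart $\Sigma_2$).} Applying the ISS estimate of $\Sigma_2$ on $[t/2,t]$ via the cocycle property, with input $u_2=x_1$, gives $\|x_2(t)\|_{X_2}\le\beta_2(\|x_2(t/2)\|_{X_2},t/2)+\gamma_2\!\big(\sup_{t/2\le s\le t}\|x_1(s)\|_{X_1}\big)$, while applying it on $[0,t/2]$ gives $\|x_2(t/2)\|_{X_2}\le\beta_2(\|x_{02}\|_{X_2},0)+\gamma_2\!\big(\sup_{0\le s\le t/2}\|x_1(s)\|_{X_1}\big)$. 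Substituting the Step~1 bounds into both makes the right-hand sides depend only on $\|x_{01}\|_{X_1}$, $\|x_{02}\|_{X_2}$, $\|u_1\|_{[0,t]}$ and $t$.

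\textbf{Step 3 (repackage the comparison functions).} Using the weak sub-additivity $\mu(a+b)\le\mu(2a)+\mu(2b)$, valid for $\mu\in\mathcal{K}$ and in the first slot for $\mu\in\mathcal{KL}$, together with $\beta_2(r,s)\le\beta_2(r,0)\in\mathcal{K}$, the input-dependent terms collapse into a single gain $\gamma\in\mathcal{K}$ acting on $\|u_1\|_{[0,t]}$, while the terms depending on the initial data — each of which carries a factor $\beta_i(\cdot,t/2)$ that vanishes as $t\to\infty$ — collapse into a single $\mathcal{KL}$ function of $\|x_0\|_X$. Adding the Step~1 bound for $\|x_1(t)\|_{X_1}$ then yields $\|x(t)\|_X\le\beta(\|x_0\|_X,t)+\gamma(\|u_1\|_{[0,t]})$, which is exactly ISS of the cascade; here we argue at the level of trajectories, so Theorem~\ref{thm:(L)ISS-Lyapunov function} is not needed, though one could alternatively build a composite ISS-Lyapunov function from those of $\Sigma_1$ and $\Sigma_2$.

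The main obstacle is not conceptual but structural and bookkeeping. Structurally, the restart steps require the cocycle identity $\phi(t,x_0,u)=\phi(t-\tau,\phi(\tau,x_0,u),u(\tau+\cdot))$ and causality of $\phi$; these are standard well-posedness attributes of a control system but were not spelled out in the preliminaries, so I would state them explicitly as properties of $\Sigma_1,\Sigma_2$. For the bookkeeping, one must check that the functions manufactured in Step~3 genuinely lie in $\mathcal{KL}$ and $\mathcal{K}$ — compositions such as $\gamma_2\circ(2\beta_1(\cdot,\cdot))$ need not be strictly monotone — which is handled by the routine device of dominating each by a slightly larger strictly increasing (resp. strictly decreasing) majorant, e.g. adding $\varepsilon r$ or $\varepsilon r e^{-t}$. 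Since neither point is deep, it is natural that the statement is quoted from \cite{khalil2002nonlinear} rather than reproved in full.
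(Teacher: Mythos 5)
Your proof is correct, and it is essentially the argument the paper relies on: the paper gives no proof of its own but simply cites \cite{khalil2002nonlinear}, whose standard proof of the cascade lemma is exactly your trajectory-level scheme of splitting the horizon at $t/2$, restarting via the cocycle property, and repackaging the comparison functions (with the same routine majorization to stay in $\mathcal{K}$ and $\mathcal{KL}$). Your explicit flagging of causality and the cocycle identity, which the paper's abstract definition of $\Sigma=(X,U_c,\phi)$ leaves implicit, is a reasonable and harmless strengthening of the hypotheses rather than a deviation.
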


% \begin{definition} \label{dfn:(L)ISS}
% $\Sigma$ is called \textit{input-to-state stable (ISS)}, if $\exists\beta\in\mathscr{KL},\gamma\in\mathscr{K}$, such that
% \begin{equation*}\label{eq:(L)ISS}
%     \|x(t)\|_X\leq\beta(\|x_0\|_X, t)+\gamma\Big(\sup_{0\leq s\leq t}\|u(s)\|_{U}\Big),
% \end{equation*}
% $\forall x_0\in X,\forall u\in U_c$ and $\forall t\geq0$.
% % It is called \textit{input-to-state stable (ISS)}, if $\rho_x=\infty$ and $\rho_u=\infty$.
% \end{definition} 

% % To verify the ISS property, Lyapunov functions can be exploited.

% \begin{definition}\label{dfn:(L)ISS-Lyapunov function}
% A continuous function $V:\mathbb{R}_+\times X\to\mathbb{R}_+$ is called an \textit{ISS-Lyapunov function} for $\Sigma$, if $\exists\psi_{1},\psi_{2}\in\mathscr{K}_\infty,\chi\in\mathscr{K}$, and $W\in\mathscr{P}$, such that:
% \begin{itemize}
%     \item[(i)] $\psi_1(\|x\|_X)\leq V(t,x)\leq\psi_2(\|x\|_X), ~\forall x\in X$
%     \item[(ii)] $\forall x\in X,\forall u\in U_c$ with $u(0)=\xi\in U$ it holds:
%     \begin{equation*}
%         \|x\|_X\geq\chi(\|\xi\|_U) \Rightarrow \dot{V}(t,x)\leq-W(\|x\|_X).
%     \end{equation*}
% \end{itemize}
% % If $D=X, \rho_x=\infty$ and $\rho_u=\infty,$ then the function $V$ is called an \textit{ISS-Lyapunov function}.
% \end{definition}

% \begin{theorem}\label{thm:ISS-Lyapunov function}
% % Let $\Sigma=(X,U_c,\phi)$ be a control system, and $x\equiv0$ be its equilibrium point.
% If $\Sigma$ admits an ISS-Lyapunov function, then it is ISS.
% \end{theorem}

\subsection{Infinite-dimensional Kalman filters}
We introduce the infinite-dimensional Kalman filters presented in \cite{curtain1975infinite}.
Let $\mathcal{H},\mathcal{K}$ be real Hilbert spaces. 
Consider the following infinite-dimensional linear system:
\begin{align*}
&du(t)=\mathcal{A}(t)u(t)dt+\mathcal{B}(t)dw(t), ~u(0)=u_0 \\
&dz(t)=\mathcal{C}(t)u(t)dt+\mathcal{F}(t)dv(t), ~z(0)=0,
\end{align*}
where $\mathcal{A}(t)$ is a linear closed operator on $\mathcal{H}$, $\mathcal{B}(\cdot)\in L^\infty([0,T];\mathcal{L}(\mathcal{H}))$, $\mathcal{C}(\cdot)\in L^\infty([0,T];\mathcal{L}(\mathcal{H},\mathcal{K}))$, and $\mathcal{F}(\cdot),\mathcal{F}(\cdot)^{-1}\in L^\infty([0,T];\mathcal{L}(\mathcal{K}))$.
% $u_0$ is an $\mathcal{H}$-valued random variable independent of $w(t)$ and $v(t)$ and has zero expectation and covariance operator $\mathcal{P}_0$. 
$w(t)$ and $v(t)$ are independent Wiener processes on $\mathcal{H}$ and $\mathcal{K}$ with covariance operators $\mathcal{W}$ and $\mathcal{V}$, respectively. 
% We assume that the observation space, $\mathcal{K},$ is finite-dimensional. 
% Then from the results (1.9), (2.1) has the unique mild solution
The infinite-dimensional Kalman filter is given by:
\begin{equation*}
    d\hat{u}(t)=\mathcal{A}(t)\hat{u}dt+\mathcal{K}(t)(dz(t)-\mathcal{C}(t)u(t)dt),~\hat{u}(0)=0
\end{equation*}
where $\mathcal{K}(t)=\mathcal{P}(t)\mathcal{C}^{*}(t)\left(\mathcal{F}(t)\mathcal{V}\mathcal{F}^{*}(t)\right)^{-1}$ is the Kalman gain, and $\mathcal{P}(t)$ is the solution of the Riccati equation:
\begin{align*}
    \frac{d\mathcal{P}(t)}{dt}&=\mathcal{A}(t)\mathcal{P}(t)+\mathcal{P}(t)\mathcal{A}^{*}(t)+\mathcal{B}(t)\mathcal{W}\mathcal{B}^{*}(t)\\
    &\quad-\mathcal{P}(t)\mathcal{C}^{*}(t)\left(\mathcal{F}(t)\mathcal{V}\mathcal{F}^{*}(t)\right)^{-1}\mathcal{C}(t)\mathcal{P}(t).
\end{align*}

\section{Problem formulation} \label{section:problem formulation}
This work studies the density control problem of robotic swarms.
% Specifically, we want to design velocity commands for individual robots such that their density evolves to a target profile. 
Consider $N$ robots in a bounded convex domain $\Omega\subset\mathbb{R}^n$, which are assumed to be homogeneous and satisfy:
\begin{equation} \label{eq:Langevin equation}
    dX_t^i=v(X_t^i,t)dt+\sqrt{2\sigma(t)}dB_t^i, \quad i = 1,\dots,N,
\end{equation}
where $X_t^i\in\Omega$ represents the position of the $i$-th robot, $v=(v_1,\dots,v_n)\in\mathbb{R}^n$ is the velocity field to be designed, $\{B_t^i\}$ are standard Wiener processes assumed to be independent across the robots, and $\sqrt{2\sigma(t)}\in\mathbb{R}$ is the standard deviation.

In the mean-field limit as $N\to\infty$, the collective behavior of the robots can be captured by their probability density
\begin{align*}
    p(x,t)\approx\frac{1}{N}\sum_{i=1}^N\delta_{X_t^i}
\end{align*}
with $\delta_{x}$ being the Dirac distribution, and this density satisfies a Fokker-Planck equation given by:
\begin{align} \label{eq:FP equation}
\begin{split}
    \partial_t p =-\nabla\cdot({v} p) + \Delta(\sigma p) &\quad\text{in}\quad \Omega\times(0,T), \\
    p=p_0 &\quad\text{on}\quad \Omega\times\{0\},\\
    \boldsymbol{n} \cdot(\nabla(\sigma p)-{v}p)=0 &\quad\text{on}\quad \partial\Omega\times(0,T),
\end{split}
\end{align}
where $\boldsymbol{n}$ is the unit inner normal to the boundary $\partial\Omega$, and $p_0(x)$ is the initial density. 
The last equation is the \textit{reflecting boundary condition} to confine the robots within $\Omega$. 

\begin{remark}
% Note that \eqref{eq:Langevin equation} and \eqref{eq:FP equation} share the same coefficients.
% Hence, the velocity field $v$ we design based on \eqref{eq:FP equation} can be easily implemented on individual \eqref{eq:Langevin equation}. 
This work focuses on the interconnected stability of density estimation and control.
For clarity, the robots are assumed to be first-order integrators in \eqref{eq:Langevin equation}.
Density control for heterogeneous higher-order nonlinear systems is studied in a separate work \cite{zheng2021backstepping}.
The interconnected stability results to be presented later can be generalized to these more general systems by combining the stability results in \cite{zheng2021backstepping}.
% Then, individual robots need to derive their own low-level controller to track the reference velocity command, which can however be done in a distributed way.
% The velocity tracking problem has been widely studied in literature especially for mobile robots, and hence is not studied in this work.
\end{remark}

The problems studied in this work are stated as follow.

\begin{problem}[Density control]
Consider \eqref{eq:FP equation}.
Given a target density $p_*(x)$, design  $v$ such that $p(x,t)\to p_*(x)$.
\end{problem}

\begin{problem}[Density estimation]
Consider \eqref{eq:FP equation}.
Given the robots' states $\{X_t^i\}_{i=1}^N$, estimate $p$ and its gradient.
\end{problem}

\begin{problem}[Feedback interconnection]
Consider \eqref{eq:FP equation}.
Given a target density $p_*(x)$, let $v$ be designed as a feedback function of certain density estimates that are computed based on \eqref{eq:FP equation}.
Prove that $p(x,t)\to p_*(x)$.
\end{problem}

\section{Main results}
\label{section:main results}
\subsection{Modified density feedback laws}

Given a smooth target density $p_*(x)$, bounded from above and below by positive constants, we design:
\begin{equation}\label{eq:modified density feedback law 1}
    v(x,t)=-\alpha(x,t)\nabla\frac{p(x,t)}{p_*(x)}+\frac{\sigma(t)\nabla p_*(x)}{p_*(x)}
\end{equation}
where $\alpha\geq0$ is a design parameter for individuals to adjust their velocity magnitude.
In the collective level, \eqref{eq:modified density feedback law 1} is called density feedback because of its explicit dependence on the mean-field density $p$.
In the individual level, \eqref{eq:modified density feedback law 1} is essentially nonlinear state feedback and the velocity input for the $i$-th robot is simply given by $v(X_t^i,t)$. 

\begin{remark}
Compared with the control laws proposed in \cite{eren2017velocity, elamvazhuthi2018bilinear, zheng2021transporting}, the remarkable difference of \eqref{eq:modified density feedback law 1} is that $p$ does not appear in any denominator, which provides several advantages.
First, it relaxes the requirement for $p$ to be strictly positive and avoids producing large velocity when $p$ is close to 0.
% In fact, it doesn't even require $p$ to be positive, which is very useful because there exist density estimation algorithms that can generate negative estimates.
Second, it will enable us to obtain ISS results with respect to estimation errors in $L^2$ norm.
(Such results are difficult to obtain for those in \cite{eren2017velocity, elamvazhuthi2018bilinear, zheng2021transporting}.)
The significance of this property will become apparent when we study the interconnected stability of control and estimation algorithms.
\end{remark}

% In this section, we focus on the stability and robustness of \eqref{eq:modified density feedback law 1}.
Define $\Phi=p-p_*$ as the convergence error.
% We require the following assumption.

% \begin{assumption}\label{assump:p*}
% Assume $p_*=e^{-V(x)}$ where $D^{2}V\geq\lambda I_{n}>0$.
% In other words, $V$ is convex on $\mathbb{R}^n$, and the measure corresponding to $p_*$ is a log-concave measure. To restrict to a convex $\Omega$, set $V$ to be infinity outside $\Omega$. Does this give flexibility to the choice of $p_*$?
% \end{assumption}

% \begin{remark}
% The reason for this assumption is that we need the weighted Poinc\'are inequality \eqref{eq:poincare inequality} to prove exponential stability.
% However, we believe that this assumption is not necessary since the weighted Poinc\'are inequality has been shown to hold for many circumstances \cite{pechstein2013weighted}.
% We will study how to drop this assumption in future work.
% \end{remark}

\begin{theorem}
\label{thm:exponential stability}
Consider system \eqref{eq:FP equation}.
Let $v$ be given by \eqref{eq:modified density feedback law 1}.
Then $\|\Phi(t)\|_{L^2}$ converges to 0 exponentially.
\end{theorem}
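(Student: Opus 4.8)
The plan is to substitute the feedback law into the Fokker--Planck equation, recognize that the closed loop is a uniformly parabolic diffusion for the ratio $h:=p/p_*$, and then run a weighted $L^2$ energy argument closed by the Poincar\'e inequality of Lemma~\ref{lmm:Poincare inequality}.

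First I would plug \eqref{eq:modified density feedback law 1} into \eqref{eq:FP equation}. Since $\sigma=\sigma(t)$, the flux is $J=vp-\nabla(\sigma p)$, and using the identity $p_*\nabla h=\nabla p-(p/p_*)\nabla p_*$ one finds that the two gradient-of-$p_*$ terms in $v$ combine with $\nabla(\sigma p)$ so that $J=-(\alpha p+\sigma p_*)\nabla h$. Hence the closed loop is the quasilinear diffusion $\partial_t p=\nabla\cdot\big((\alpha p+\sigma p_*)\nabla h\big)$, and the reflecting boundary condition $\boldsymbol{n}\cdot J=0$ reduces to $\boldsymbol{n}\cdot\nabla h=0$ on $\partial\Omega$ because $\alpha p+\sigma p_*\geq\sigma p_*>0$ (using $\alpha\geq0$ and nonnegativity of $p$). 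This ``cancellation into clean divergence form'' is the step I expect to require the most care, since it is what makes the whole argument work.

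Next I would take the Lyapunov functional $V(t):=\tfrac12\int_\Omega \Phi^2/p_*\,dx=\tfrac12\int_\Omega p_*(h-1)^2\,dx$, which is equivalent to $\tfrac12\|\Phi(t)\|_{L^2}^2$ since $p_*$ is bounded above and below by positive constants. Using $\partial_t\Phi=\partial_t p$, the identity $\nabla(\Phi/p_*)=\nabla(h-1)=\nabla h$, integration by parts, and the vanishing of the boundary integral thanks to $\boldsymbol{n}\cdot\nabla h=0$, I get $\dot V=-\int_\Omega(\alpha p+\sigma p_*)|\nabla h|^2\,dx\leq-\underline{\sigma}\int_\Omega p_*|\nabla h|^2\,dx$, where $\underline{\sigma}:=\inf_{t\geq0}\sigma(t)>0$ and I again used $\alpha\geq0$, $p\geq0$.

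Finally, because $p$ and $p_*$ are both probability densities, $\int_\Omega\Phi\,dx=0$, i.e.\ $\int_\Omega h\,p_*\,dx=1$; applying Lemma~\ref{lmm:Poincare inequality} with $g=p_*$ and $f=h$ yields $\int_\Omega p_*|\nabla h|^2\,dx\geq C\int_\Omega p_*|h-1|^2\,dx=2CV$. Therefore $\dot V\leq-2C\underline{\sigma}V$, and Gr\"onwall's inequality gives $V(t)\leq V(0)e^{-2C\underline{\sigma}t}$, which by norm equivalence shows $\|\Phi(t)\|_{L^2}\to0$ at the exponential rate $C\underline{\sigma}$. The remaining technical issue is to justify the formal manipulations---existence, regularity and nonnegativity of $p$, and the admissibility of the integration by parts---which I would either assume as part of the standing well-posedness hypothesis or cite from the standard theory for reflecting Fokker--Planck equations.
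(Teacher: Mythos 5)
Your proposal is correct and follows essentially the same route as the paper: substitute \eqref{eq:modified density feedback law 1} to get the closed loop in divergence form $\nabla\cdot\big((\alpha p+\sigma p_*)\nabla(p/p_*)\big)$, use the weighted Lyapunov functional $\tfrac12\|\Phi\|_{L^2(1/p_*)}^2$, integrate by parts with the no-flux boundary condition, and close the estimate with Lemma~\ref{lmm:Poincare inequality} (your choice $f=p/p_*$ is the same as the paper's $f=(p-p_*)/p_*$ up to a constant) and Gr\"onwall. The only cosmetic difference is that you discard the $\alpha p$ dissipation term and rely on $\inf_t\sigma(t)>0$, whereas the paper also retains that term and invokes the strong maximum principle ($p\geq a>0$ after finite time) to account for the extra convergence speed contributed by $\alpha$.
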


% relationship with KL divergence, may be related to L^2 after exponential variable transform

\begin{proof}
Substituting \eqref{eq:modified density feedback law 1} into \eqref{eq:FP equation}, we obtain:
\begin{align*}
\begin{split}
    \partial_tp &=\nabla\cdot\left(\alpha p\nabla\frac{p}{p_*}\right) - \nabla\cdot\left(\frac{\sigma p}{p_*}\nabla p_*\right)+\Delta(\sigma p) \\
    &=\nabla\cdot\left(\alpha p\nabla\frac{p}{p_*}\right) - \nabla\cdot\left[\frac{\sigma p}{p_*}\nabla p_*-\nabla(\sigma p)\right] \\
    &=\nabla\cdot\left(\alpha p\nabla\frac{p}{p_*}\right) + \nabla \cdot\left(\sigma p_*\nabla\frac{p}{p_*}\right).
\end{split}
\end{align*}
Consider a Lyapunov function $V(t)=\frac{1}{2}\|\Phi\|_{L^2(1/p_*)}^2$.
% =\frac{1}{2}\int \frac{\left(p-p_*\right)^{2}}{p_*} d x=\frac{1}{2}\int\left(\frac{p}{p_*}-1\right)^{2} p_* d x
By the divergence theorem and the boundary condition, we have
\begin{align*}
\begin{split}
    \frac{dV}{dt}&=\int_\Omega\frac{p-p_*}{p_*} \partial_tpdx\\
    &=\int_\Omega\frac{p-p_*}{p_*}\left[\nabla\cdot\left(\alpha p\nabla\frac{p}{p_*}\right) + \nabla\cdot\left(\sigma p_*\nabla\frac{p}{p_*}\right)\right]dx \\
    &=\int_\Omega-\alpha p\left(\nabla\frac{p}{p_*}\right)^{2} - \sigma p_*\left(\nabla\frac{p}{p_*}\right)^{2}dx\\
    &\leq-(\alpha_1(t)+\alpha_2(t))\left\|\nabla\frac{p}{p_*}\right\|_{L^2}^2\\
    &=-(\alpha_1(t)+\alpha_2(t))\left\|\nabla\frac{p-p_*}{p_*}\right\|_{L^2}^2.
\end{split}
\end{align*}
By the Poinc\'are inequality \eqref{eq:poincare inequality} (where we set $f=\frac{p-p_*}{p_*}$ and $g=p_*$) and the fact that $\int_\Omega(p-p_*)dx=0$, we have
\begin{align*}
\begin{split}
    \frac{dV}{dt}&\leq-(\alpha_1(t)+\alpha_2(t))C^2\left\|\frac{p-p_*}{p_*}\right\|_{L^2}^2\\
    &=-(\alpha_1(t)+\alpha_2(t))C^2\left\|\Phi\right\|_{L^2(1/p_*^2)}^2
\end{split}
\end{align*}
where $\alpha_1(t),\alpha_2(t)\in\mathbb{R}$ satisfy $0\leq\min_x\alpha p\leq\alpha_1(t)\leq\max_x\alpha p$ and $0<\min_x\sigma p_*\leq\alpha_2(t)\leq\max_x\sigma p_*$, and $C>0$ is the Poinc\'are constant.
By the strong maximum principle \cite{lieberman1996second}, there exist $t_1,a>0$ such that $p\geq a$ for $t\geq t_1$.
Hence, $\alpha_1=0$ if and only if $\alpha=0$ for $t\geq t_1$.
\end{proof}

% {\color{red}That is a mistake.
% I intended to use the following Poincare inequality for probability measures.
% $$
% \int_\Omega|\nabla g|^2p_*dx\geq\lambda\int_\Omega\left|g-\int_\Omega gp_*dx\right|^2p_*dx
% $$
% where $p_*$ is a probability measure satisfying $p_*(x)=e^{-V(x)}$ and $V$ needs to satisfy certain conditions.
% I forgot that $p_*$ needs to this condition, which is restrictive.
% I think at least the steps before ``(by the Poincare inequality)'' are correct, for which we only obtain convergence of $\|p-p_*\|_{L^2}\to0$ instead of exponential convergence.
% But I still need something like $\left\|\nabla\frac{p}{p_*}\right\|_{L^2}\geq C\left\|p-p_*\right\|_{L^2}$ in the proof of Theorem 3 to complete the proof of ISS.
% I don't know how to fix it at this point. 
% Do you have any thoughts?  
% It seems that the Poincare inequality is critical.
% }

\begin{remark}
Note that we allow $\alpha=0$ in \eqref{eq:modified density feedback law 1}.
In this case, \eqref{eq:modified density feedback law 1} becomes $v=\frac{\sigma\nabla p_*}{p_*}$, which is a well-known law to drive stochastic particles towards a target distribution in physics \cite{markowich2000trend}.
% In other words, the open-loop system is exponentially stable.
However, the convergence speed will be very slow since $\sigma$ is small in general.
We add the feedback term $-\alpha\nabla\frac{p}{p_*}$ to provide extra and locally adjustable convergence speed.
The accelerated convergence is reflected by $\alpha_1$ in the proof.
\end{remark}

The mean-field density $p$ cannot be measured directly.
We will study how to estimate $p$ in next section.
We first establish some robustness results regardless of what estimation algorithm to use.
It is useful to rewrite \eqref{eq:modified density feedback law 1} as:
\begin{equation}\label{eq:modified density feedback law 2}
    v=-\alpha\frac{p_*\nabla p-p\nabla p_*}{p_*^2}+\frac{\sigma(t)\nabla p_*}{p_*}.
\end{equation}
In general, we need to estimate $p$ and $\nabla p$ separately because $\nabla$ is an unbounded operator, i.e., any algorithm that produces accurate estimates of $p$ may have arbitrarily large estimation errors for $\nabla p$.
% Thus, we need to design additional algorithms to estimate $\nabla p$.
Let $\hat{p}$ and $\widehat{\nabla p}$ be the estimates of $p$ and $\nabla p$.
Based on \eqref{eq:modified density feedback law 2}, the control law using estimates is given by:
\begin{equation}\label{eq:modified density feedback law using estimates}
    v=-\alpha\frac{p_*\widehat{\nabla p}-\hat{p}\nabla p_*}{p_*^2}+\frac{\sigma(t)\nabla p_*}{p_*}.
\end{equation}
Define $\epsilon=\hat{p}-p$ and $\epsilon_g=\widehat{\nabla p}-\nabla p$ as the estimation errors.
Substituting \eqref{eq:modified density feedback law using estimates} into \eqref{eq:FP equation}, we obtain the closed-loop system:
\begin{equation} \label{eq:tracking error equation}
    \partial_t\Phi = \partial_tp =\nabla\cdot\left(\alpha p\frac{p_*\widehat{\nabla p}-\hat{p}\nabla p_*}{p_*^2}+\sigma p_*\nabla\frac{p}{p_*}\right).
\end{equation}

We have the following robustness result.
 
\begin{theorem}\label{thm:ISS}
Consider \eqref{eq:FP equation}.
Let $v$ be given by \eqref{eq:modified density feedback law using estimates}.
If $\alpha>0$, then $\|\Phi(t)\|_{L^2}$ is ISS with respect to $\|\epsilon(t)\|_{L^2}$ and $\|\epsilon_g(t)\|_{L^2}$.
\end{theorem}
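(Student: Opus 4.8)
The plan is to treat \eqref{eq:modified density feedback law using estimates} as a perturbation of \eqref{eq:modified density feedback law 1} and reuse the Lyapunov function $V(t)=\tfrac12\|\Phi\|_{L^2(1/p_*)}^2$ from the proof of Theorem~\ref{thm:exponential stability}. Writing $\hat p=p+\epsilon$ and $\widehat{\nabla p}=\nabla p+\epsilon_g$, a direct computation gives
$$\frac{p_*\widehat{\nabla p}-\hat p\nabla p_*}{p_*^2}=\nabla\frac{p}{p_*}+w,\qquad w:=\frac{p_*\epsilon_g-\epsilon\nabla p_*}{p_*^2},$$
so the closed-loop flux in \eqref{eq:tracking error equation} splits as $J=(\alpha p+\sigma p_*)\nabla\frac{p}{p_*}+\alpha p\,w$: the first part is precisely the nominal flux analyzed in Theorem~\ref{thm:exponential stability}, and the second is linear in the estimation errors through $w$.

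Next I would differentiate $V$ along \eqref{eq:tracking error equation} and integrate by parts; since the reflecting boundary condition in \eqref{eq:FP equation} forces $\boldsymbol n\cdot J=0$ on $\partial\Omega$, no boundary terms appear, and using $\nabla\frac{p-p_*}{p_*}=\nabla\frac{p}{p_*}$ one obtains
$$\dot V=-\int_\Omega(\alpha p+\sigma p_*)\Big|\nabla\tfrac{p}{p_*}\Big|^2dx-\int_\Omega\alpha p\,\nabla\tfrac{p}{p_*}\cdot w\,dx.$$
Applying Young's inequality pointwise to the cross term, $\alpha p\,\nabla\frac{p}{p_*}\cdot w\ge-\tfrac12\alpha p\,\big|\nabla\frac{p}{p_*}\big|^2-\tfrac12\alpha p\,|w|^2$, and absorbing the first piece leaves $\dot V\le-\int_\Omega(\tfrac12\alpha p+\sigma p_*)\big|\nabla\frac{p}{p_*}\big|^2dx+\tfrac12\int_\Omega\alpha p\,|w|^2dx$. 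For the negative term I would, exactly as in Theorem~\ref{thm:exponential stability}, bound $\tfrac12\alpha p+\sigma p_*$ below by $\alpha_2(t)\ge\min_x\sigma p_*>0$, apply the Poincar\'e inequality of Lemma~\ref{lmm:Poincare inequality} with $f=\frac{p-p_*}{p_*}$, $g=p_*$ and $\int_\Omega\Phi\,dx=0$, obtaining $\dot V\le-cV+\tfrac12\int_\Omega\alpha p\,|w|^2dx$ for some $c>0$. Using $0<c_1\le p_*\le c_2$ and $\|\nabla p_*\|_{L^\infty}<\infty$ gives $|w|^2\le\kappa_0(|\epsilon|^2+|\epsilon_g|^2)$ pointwise, so if $\alpha$ and $\|p(t)\|_{L^\infty}$ are bounded then $\tfrac12\int_\Omega\alpha p\,|w|^2dx\le\kappa(\|\epsilon\|_{L^2}^2+\|\epsilon_g\|_{L^2}^2)$. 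Thus $V$ is an ISS-Lyapunov function in the sense of Definition~\ref{dfn:(L)ISS-Lyapunov function}, and Theorem~\ref{thm:(L)ISS-Lyapunov function} (or a comparison-lemma argument applied directly to $\dot V\le-cV+\kappa(\|\epsilon\|_{L^2}^2+\|\epsilon_g\|_{L^2}^2)$) yields the claimed ISS estimate for $\|\Phi(t)\|_{L^2}$, with a $\mathscr{K}$-gain of the form $\gamma(r)=\kappa' r$.

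The delicate step is the bound on $\int_\Omega\alpha p\,|w|^2dx$: it needs a uniform-in-time $L^\infty$ control of $\alpha p$, i.e.\ essentially a bound on $\|p(t)\|_{L^\infty}$, without which the perturbation is only dominated by a $p$-weighted error norm rather than by $\|\epsilon\|_{L^2}$ and $\|\epsilon_g\|_{L^2}$. This is where the \emph{bilinear} structure of \eqref{eq:tracking error equation} — the error-dependent drift enters multiplied by the state $p$ — is essential: boundedness of $p$ converts the bilinear coupling into a linear-in-error gain. I would establish the needed bound on $p$ from parabolic regularity / the maximum principle (complementing the lower bound $p\ge a$ already used in Theorem~\ref{thm:exponential stability}) while keeping $\alpha$ bounded; the rest is a routine adaptation of the energy computation in Theorem~\ref{thm:exponential stability}.
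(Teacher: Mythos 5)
Your proof is correct and follows essentially the same route as the paper: the same weighted Lyapunov function $V=\tfrac12\|\Phi\|_{L^2(1/p_*)}^2$, the same decomposition of the perturbed flux into the nominal part plus the error term $\frac{p_*\epsilon_g-\epsilon\nabla p_*}{p_*^2}$, integration by parts using the reflecting boundary condition, and the weighted Poincar\'e inequality. The only cosmetic difference is that you close the estimate with Young's inequality to reach the dissipative form $\dot V\le -cV+\kappa\big(\|\epsilon\|_{L^2}^2+\|\epsilon_g\|_{L^2}^2\big)$, whereas the paper uses Cauchy--Schwarz with a $\theta$-splitting to verify the gain-margin (implication) form of the ISS-Lyapunov definition; both arguments rest on the same implicit uniform bounds (on $\alpha p$ from above and on $\sigma p_*$ from below) that you flag explicitly.
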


\begin{proof}
Consider $V(t)=\frac{1}{2}\|\Phi\|_{L^2(1/p_*)}^2$.
By the divergence theorem and the boundary condition, we have
\begin{align*}
\begin{split}
    \frac{dV}{dt}&=\int_\Omega\frac{p-p_*}{p_*}\left[\nabla\cdot\left(\alpha p\frac{p_*\widehat{\nabla p}-\hat{p}\nabla p_*}{p_*^2}+\sigma p_*\nabla\frac{p}{p_*}\right)\right]dx \\
    % &=\int\frac{p-p_*}{p_*}\left[\nabla\cdot\left(\alpha p\frac{p_*\nabla p-p\nabla p_*+p_*\epsilon_g-\epsilon\nabla p_*}{p_*^2}\right) + \nabla \cdot\left(\sigma p_*\nabla\frac{p}{p_*}\right)\right]dx \\
    &=\int_\Omega-\alpha p\nabla\frac{p}{p_*}\cdot\left(\frac{p_*\nabla p-p\nabla p_*+p_*\epsilon_g-\epsilon\nabla p_*}{p_*^2}\right)\\
    &\quad-\sigma p_*\left(\nabla\frac{p}{p_*}\right)^{2}dx\\
    &=\int_\Omega-\alpha p\nabla\frac{p}{p_*}\cdot\left(\nabla\frac{p}{p_*}+\frac{1}{p_*}\epsilon_g-\frac{\nabla p_*}{p_*^2}\epsilon\right)\\
    &\quad-\sigma p_*\left(\nabla\frac{p}{p_*}\right)^{2}dx\\
    &\leq-(\alpha_1(t)+\alpha_2(t))\left\|\nabla\frac{p}{p_*}\right\|_{L^2}^2\\
    &\quad+\alpha_3(t)\left\|\nabla\frac{p}{p_*}\right\|_{L^2}\left\|\frac{1}{p_*}\epsilon_g-\frac{\nabla p_*}{p_*^2}\epsilon\right\|_{L^2}\\
    % &\leq-(\alpha_1(t)+\alpha_2(t))C^2\left\|\frac{p-p_*}{p_*}\right\|_{L^2}^2
    % &\quad+\alpha_3(t)C\left\|\frac{p-p_*}{p_*}\right\|_{L^2}\left\|\frac{1}{p_*}\epsilon_g-\frac{\nabla p_*}{p_*^2}\epsilon\right\|_{L^2}
\end{split}
\end{align*}
where $\alpha_1$ and $\alpha_2$ are defined in the proof of Theorem \ref{thm:exponential stability}, and $\alpha_3(t)\in\mathbb{R}$ satisfies $0\leq\min_x\alpha p\leq\alpha_3(t)\leq\max_x\alpha p$.
Using a constant $\theta\in(0,1)$ to split the first term and applying the Poinc\'are inequality \eqref{eq:poincare inequality}, we have
\begin{align*}
\begin{split}
    \frac{dV}{dt}&\leq-(\alpha_1(t)+\alpha_2(t))(1-\theta)C^2\left\|\frac{p-p_*}{p_*}\right\|_{L^2}^2\\
    &\quad-(\alpha_1(t)+\alpha_2(t))\theta C\left\|\nabla\frac{p}{p_*}\right\|_{L^2}\left\|\frac{p-p_*}{p_*}\right\|_{L^2}\\
    &\quad +\alpha_3(t)\big(\|\epsilon_g\|_{L^2(\frac{1}{p_*})}+\|\nabla p_*\|_{L^\infty}\|\epsilon\|_{L^2(\frac{1}{p_*^2})}\|\big)\left\|\nabla\frac{p}{p_*}\right\|_{L^2}\\
    &\leq-(\alpha_1(t)+\alpha_2(t))(1-\theta)C^2\|\Phi\|_{L^2(1/p_*^2)}^2,
\end{split}
\end{align*}
if $\|\Phi\|_{L^2(\frac{1}{p_*^2})}\geq\alpha_{\sup}(\|\epsilon_g\|_{L^2(\frac{1}{p_*})}+\|\nabla p_*\|_{L^\infty}\|\epsilon\|_{L^2(\frac{1}{p_*^2})}\|)$ where $\alpha_{\sup}:=\sup_t\frac{\alpha_3(t)}{(\alpha_1(t)+\alpha_2(t))\theta C}$.
The ISS property then follows from Theorem \ref{thm:(L)ISS-Lyapunov function}.
\end{proof}

% By using feedback in the control law \eqref{eq:modified density feedback law 1}, we are able to provide extra design flexibility for the convergence speed and individual velocity magnitude, which however brings potential robustness issue in terms of estimation errors.
This theorem ensures that $\|\Phi(t)\|_{L^2}$ is always bounded by a function of $\|\epsilon(t)\|_{L^2}$ and $\|\epsilon_g(t)\|_{L^2}$, and will approach 0 exponentially if both $\hat{p}$ and $\widehat{\nabla p}$ are accurate.

\subsection{Density and gradient estimation}

Now we design algorithms to estimate $p$ and $\nabla p$ separately.
The corresponding algorithms will be referred to as density filters and gradient filters, respectively.
The former has been studied in \cite{zheng2020pde,zheng2021distributedmean} and is included for completeness.
In the estimation problem, the system \eqref{eq:FP equation} is assumed known (including $v$).
We use kernel density estimation (KDE) to construct noisy measurements of $p$ and $\nabla p$ and utilize two important properties: the dynamics governing $p$ and $\nabla p$ are linear; their measurement noises are approximately Gaussian.
Then we use infinite-dimensional Kalman filters to design two filters to estimate them separately.

% First, it is important to point out that if $v$ is given as a function of $p$, the distribution of $\{X_t^i\}$, then the family of stochastic processes \eqref{eq:Langevin equation} are dependent of each other in general.
% Such stochastic processes are known as the McKean-Vlasov processes.
First, the processes $\{X_t^i\}_{i=1}^{N}$ become asymptotically independent as $N\to\infty$ \cite{huang2006large}.
% Essentially this is because their interaction becomes weaker and weaker in the sense that the contribution of a given particle $i$ is of order $1/N$.
% Hence, as $N\to\infty$, we expect the interaction to vanish, in the sense that a given particle $i$ does not appear in the measure term anymore. 
% If particles do not affect the measure flow, then the particles should be i.i.d., as they have the same coefficients $b$ and $\sigma$ and are driven by independent Brownian motions.
Hence, for a large $N$, $\{X_t^i\}_{i=1}^{N}$ can be approximately treated as independent samples of $p(x,t)$. 
We then use KDE to construct priori estimates which are used as noisy measurements of $p$ and $\nabla p$.

We need to derive an evolution equation for $\nabla p$.
By applying the gradient operator on both sides of \eqref{eq:FP equation}, we have:
\begin{align*}
\begin{split}
    \partial_t(\nabla p)=\nabla(\partial_tp)&=\nabla[-\nabla\cdot(vp) + \Delta(\sigma p)]\\
    &=\nabla[-\nabla\cdot(v\mathscr{I}(\nabla p)) + \Delta(\sigma\mathscr{I}(\nabla p))]
\end{split}
\end{align*}
where $\mathscr{I}$ is an integration operator defined as follows.
For a given vector field $F:\Omega\to\mathbb{R}^n$, define $\mathscr{I}(F)=f$, where $f:\Omega\to\mathbb{R}$ is uniquely determined by the following relations:
\begin{equation}
     \nabla f=F\text{ and }\boldsymbol{n}\cdot(\nabla(\sigma f)-{v}f)=0\text{ on } \partial\Omega.
\end{equation}
For simplicity, denote $q=\nabla p$, i.e., $q:\Omega\times(0,T)\to\mathbb{R}^n$.
We obtain a partial-integro-differential equation for $q$:
\begin{align}\label{eq:gradient equation}
\begin{split}
    \partial_tq=\nabla[-\nabla\cdot(v\mathscr{I}(q)) + \Delta(\sigma\mathscr{I}(q))]
\end{split}
\end{align}
which is a linear equation.
We will rewrite \eqref{eq:FP equation} and \eqref{eq:gradient equation} as evolution equations in $\mathcal{H}_1=L^2(\Omega)$ and $\mathcal{H}_2=L^2(\Omega)^n:=\{(F_1,\dots,F_n):\Omega\to\mathbb{R}^n\mid F_i\in L^2(\Omega)\}$, respectively.
% , and use KDE to construct noisy measurements for their states.
Specifically, define the following linear operators:
\begin{align*}
    &\mathcal{A}(t)p:=-\nabla\cdot[v(t)p] + \Delta[\sigma(t)p],\\
    &\mathcal{A}_g(t)q:=\nabla[-\nabla\cdot(v(t)\mathscr{I}(q)) + \Delta(\sigma(t)\mathscr{I}(q))].
\end{align*}
For any $t$, we use KDE and the samples $\{X_t^i\}_{i=1}^N$ to construct a priori estimate $p_{\text{KDE}}(x,t)$ \cite{cacoullos1966estimation}:
\begin{align*}
    p_{\text{KDE}}(x,t) & = \frac{1}{Nh^{n}}\sum_{i=1}^{N}K\Big(\frac{1}{h}(x-X_t^i)\Big),
\end{align*}
where $K(x)$ is a kernel function and $h$ is the bandwidth.
% The Gaussian kernel is frequently used:
% $$
% K(x)=\frac{1}{(2\pi)^{n/2}}\exp\big(-\frac{1}{2}x^Tx\big).
% $$
We treat $p_{\text{KDE}}(x,t)$ as a noisy measurement of $p(x,t)$. 
% (See Appendices for how to construct $p_{\text{KDE}}$.)
We also treat $\nabla p_{\text{KDE}}(x,t)$, the gradient of $p_{\text{KDE}}(x,t)$, as a noisy measurement of $q(x,t)$.
Define $w(x,t)=p_{\text{KDE}}(x,t)-p(x,t)$ and $w_g(x,t)=\nabla p_{\text{KDE}}(x,t)-q(x,t)$.
Then, $w(x,t)$ and $w_g(x,t)$ are approximately infinite-dimensional Gaussian noises with diagonal covariance operators $\bar{\mathcal{R}}(t)=k\operatorname{diag}(p(t))$ and $\bar{\mathcal{R}}_g(t)=k_g\operatorname{diag}(q(t))$, respectively, where $k,k_g>0$ are constants depending on the kernels and $N$.
(See Appendices in \cite{zheng2021distributedmean} for why $w$ and $w_g$ are approximately Gaussian.)

We obtain two infinite-dimensional linear systems:
\begin{align}
&\left\{\begin{array}{l}
\dot{p}=\mathcal{A}(t) p \\
y=p_{\text{KDE}}(t)=p+w(t)
\end{array}\right. \label{eq:density evolution equation}\\
&\left\{\begin{array}{l}
\dot{q}=\mathcal{A}_{g}(t) q \\
y_{g}=\nabla p_{\text{KDE}}(t)=q+w_{g}(t)
\end{array}\right. \label{eq:gradient evolution equation}
\end{align}
We can design infinite-dimensional Kalman filters for \eqref{eq:density evolution equation} and \eqref{eq:gradient evolution equation}.
% \begin{align}
% \begin{split}
%     &\dot{\hat{p}}=\mathcal{A}(t)p+\bar{\mathcal{P}}\bar{\mathcal{R}}^{-1}(t)(y-\hat{p}),\quad p(t_0)=p_{\text{KDE}}(t_0)\\
%     &\dot{\bar{\mathcal{P}}}=\mathcal{A}(t)\bar{\mathcal{P}}+\bar{\mathcal{P}}\mathcal{A}^{*}(t)-\bar{\mathcal{P}}\bar{\mathcal{R}}^{-1}(t)\bar{\mathcal{P}},\quad \bar{\mathcal{P}}(t_0)=\bar{\mathcal{P}}_0
% \end{split}
% \end{align}
% and 
% \begin{align}
% \begin{split}
%     &\dot{\hat{q}}=\mathcal{A}_g(t)g+\bar{\mathcal{P}}_g\bar{\mathcal{R}}_g^{-1}(t)(y_g-\hat{q}),\quad g(t_0)=\nabla p_{\text{KDE}}(t_0)\\
%     &\dot{\bar{\mathcal{P}}}_g=\mathcal{A}_g(t)\bar{\mathcal{P}}_g+\bar{\mathcal{P}}_g\mathcal{A}_g^{*}(t)-\bar{\mathcal{P}}_g\bar{\mathcal{R}}_g^{-1}(t)\bar{\mathcal{P}}_g,\quad \bar{\mathcal{P}}_g(t_0)=\bar{\mathcal{P}}_{g0}.
% \end{split}
% \end{align}
However, $\bar{\mathcal{R}}$ and $\bar{\mathcal{R}}_g$ are unknown because they depend on $p$ and $q$, the states to be estimated.
Hence, we approximate $\bar{\mathcal{R}}$ and $\bar{\mathcal{R}}_g$ with $\mathcal{R}(t)=k\operatorname{diag}(p_{\text{KDE}}(t))$ and $\mathcal{R}_g(t)=k_g\operatorname{diag}(\nabla p_{\text{KDE}}(t))$.
In this way, the ``suboptimal'' density filter is given by:
\begin{align}
    &\dot{\hat{p}}=\mathcal{A}(t)\hat{p}+\mathcal{P}(t)\mathcal{R}^{-1}(t)(y(t)-\hat{p}), \label{eq:density filter}\\
    &\dot{\mathcal{P}}=\mathcal{A}(t)\mathcal{P}+\mathcal{P}\mathcal{A}^{*}(t)-\mathcal{P}\mathcal{R}^{-1}(t)\mathcal{P}, \label{eq:density Riccati}
\end{align}
% with $\hat{p}(0)=p_{\text{KDE}}(0)$ and $\mathcal{P}(0)=\mathcal{P}_0$, 
and the ``suboptimal'' gradient filter is given by:
\begin{align}
    &\dot{\hat{q}}=\mathcal{A}_g(t)\hat{q}+\mathcal{Q}(t)\mathcal{R}_g^{-1}(t)(y_g(t)-\hat{q}), \label{eq:gradient filter}\\
    &\dot{\mathcal{Q}}=\mathcal{A}_g(t)\mathcal{Q}+\mathcal{Q}\mathcal{A}_g^{*}(t)-\mathcal{Q}\mathcal{R}_g^{-1}(t)\mathcal{Q}, \label{eq:gradient Riccati}
\end{align}
% with $\hat{q}(0)=\nabla p_{\text{KDE}}(0)$ and $\mathcal{Q}(0)=\mathcal{Q}_{0}$, 
where $\hat{p}$ and $\hat{q}$ are our estimates for $p$ and $q$ (i.e., $\nabla p$).
 
Now we study the stability and optimality of these two filters.
Let $\bar{\mathcal{P}}$ and $\bar{\mathcal{Q}}$ be the corresponding solutions of \eqref{eq:density Riccati} and \eqref{eq:gradient Riccati} when $\mathcal{R}$ and $\mathcal{R}_g$ are respectively replaced by $\bar{\mathcal{R}}$ and $\bar{\mathcal{R}}_g$, their true but unknown values.
Then, $\bar{\mathcal{P}}$ and $\bar{\mathcal{Q}}$ represent the optimal flows of estimation error covariance.
We denote $\mathcal{L}=\mathcal{P}\mathcal{R}^{-1}$ and $\mathcal{L}_g=\mathcal{Q}\mathcal{R}_g^{-1}$ to represent the suboptimal Kalman gains, and denote $\bar{\mathcal{L}}=\bar{\mathcal{P}}\bar{\mathcal{R}}^{-1}$ and $\bar{\mathcal{L}}_g=\bar{\mathcal{Q}}\bar{\mathcal{R}}_g^{-1}$ to represent the optimal Kalman gains.
Define $\epsilon=\hat{p}-p$ and $\epsilon_g=\hat{q}-q$ as the estimation errors of the filters.
Then $\epsilon$ and $\epsilon_g$ satisfy the follow equations:
\begin{align} 
    &\Dot{\epsilon} = (\mathcal{A}(t)-\mathcal{P}(t)\mathcal{R}^{-1}(t))\epsilon+\mathcal{P}(t)\mathcal{R}^{-1}(t)w \label{eq:mean field estimation error equation}\\
    &\Dot{\epsilon}_g = (\mathcal{A}_g(t)-\mathcal{Q}(t)\mathcal{R}_g^{-1}(t))\epsilon_g+\mathcal{Q}(t)\mathcal{R}_g^{-1}(t)w_g\label{eq:gradient estimation error equation}.
\end{align}
We can show that the suboptimal filters are stable and remain close to the optimal ones.
The following theorem is for the density filter \eqref{eq:density filter} and \eqref{eq:density Riccati}, which is proved in \cite{zheng2020pde,zheng2021distributedmean}.

\begin{theorem} \label{thm:stability of density filter}
Assume that $\|\mathcal{P}(t)\|$ and $\|\Bar{\mathcal{P}}(t)\|$ are uniformly bounded, and $\exists c_1,c_2>0$ such that for $t\geq 0$,
\begin{equation}
    c_1\mathcal{I}\leq \mathcal{R}^{-1}(t),\Bar{\mathcal{R}}^{-1}(t),\mathcal{P}^{-1}(t),\Bar{\mathcal{P}}^{-1}(t)\leq c_2\mathcal{I}.
\end{equation}
Then we have:
\begin{itemize}
    \item[(i)] $\|\epsilon\|_{L^2}$ is ISS with respect to (w.r.t.) $\|w\|_{L^2}$ (and is uniformly exponentially stable if $w=0$);
    \item[(ii)] $\|\bar{\mathcal{P}}-\mathcal{P}\|$ is LISS w.r.t. $\|\Bar{\mathcal{R}}^{-1}-\mathcal{R}^{-1}\|$;
    \item[(iii)] $\|\Bar{\mathcal{L}}-\mathcal{L}\|$ is LISS w.r.t. $\|\Bar{\mathcal{R}}^{-1}-\mathcal{R}^{-1}\|$.
\end{itemize}
\end{theorem}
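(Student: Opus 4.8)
The plan is to treat the density filter error system \eqref{eq:mean field estimation error equation}, namely
$\dot{\epsilon}=(\mathcal{A}(t)-\mathcal{P}(t)\mathcal{R}^{-1}(t))\epsilon+\mathcal{P}(t)\mathcal{R}^{-1}(t)w$,
as an abstract linear evolution equation on $\mathcal{H}_1=L^2(\Omega)$ and to build an ISS-Lyapunov function out of the suboptimal covariance operator itself, i.e.\ $V(t,\epsilon)=\langle \mathcal{P}^{-1}(t)\epsilon,\epsilon\rangle$. The uniform bounds $c_1\mathcal{I}\le\mathcal{P}^{-1}(t)\le c_2\mathcal{I}$ immediately give the sandwich condition (i) of Definition \ref{dfn:(L)ISS-Lyapunov function} with $\psi_1(r)=c_1 r^2$, $\psi_2(r)=c_2 r^2$. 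So the whole content of part (i) is the decay estimate for $\dot V$.

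For the decay estimate, I would differentiate $V$ along trajectories. The key algebraic identity is that $\frac{d}{dt}\mathcal{P}^{-1}=-\mathcal{P}^{-1}\dot{\mathcal{P}}\mathcal{P}^{-1}$, and substituting the Riccati equation \eqref{eq:density Riccati} for $\dot{\mathcal P}$ turns $\frac{d}{dt}\mathcal P^{-1}$ into $-\mathcal P^{-1}\mathcal A-\mathcal A^{*}\mathcal P^{-1}+\mathcal R^{-1}$. Combining this with the two terms coming from $\langle\mathcal P^{-1}\dot\epsilon,\epsilon\rangle$ and $\langle\mathcal P^{-1}\epsilon,\dot\epsilon\rangle$, the operator $\mathcal A$ cancels out completely (this is the standard Kalman-filter Lyapunov computation), leaving
$\dot V=-\langle\mathcal R^{-1}\epsilon,\epsilon\rangle+2\langle\mathcal R^{-1}w,\epsilon\rangle$ plus a term $-\langle\mathcal P^{-1}\mathcal R^{-1}\mathcal P\,\mathcal P^{-1}\epsilon,\epsilon\rangle=-\langle\mathcal R^{-1}\epsilon,\epsilon\rangle$ — one must track the exact coefficient, but the upshot is $\dot V\le -c_1\|\epsilon\|^2+2c_2\|w\|\,\|\epsilon\|$ using the operator bounds. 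Then a Young's-inequality split $2c_2\|w\|\|\epsilon\|\le \tfrac{c_1}{2}\|\epsilon\|^2+\tfrac{2c_2^2}{c_1}\|w\|^2$ yields $\dot V\le -\tfrac{c_1}{2}\|\epsilon\|^2$ whenever $\|\epsilon\|\ge \chi(\|w\|)$ with $\chi(r)=\tfrac{2c_2}{c_1}r$, which is exactly condition (ii) of Definition \ref{dfn:(L)ISS-Lyapunov function}; Theorem \ref{thm:(L)ISS-Lyapunov function} then gives ISS, and setting $w=0$ gives $\dot V\le-c_1\|\epsilon\|^2\le-\tfrac{c_1}{c_2}V$, hence uniform exponential decay. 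The same argument applied verbatim to \eqref{eq:gradient estimation error equation}, \eqref{eq:gradient filter}, \eqref{eq:gradient Riccati} with $\mathcal{A}_g,\mathcal{Q},\mathcal{R}_g,\mathcal{Q}^{-1}$ proves the gradient-filter counterpart.

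For parts (ii) and (iii) — the closeness of suboptimal to optimal — I would subtract the two Riccati equations and study $\Delta\mathcal P:=\bar{\mathcal P}-\mathcal P$, which satisfies a linear (in $\Delta\mathcal P$) Lyapunov-type differential equation driven by the forcing term $\bar{\mathcal P}(\bar{\mathcal R}^{-1}-\mathcal R^{-1})\bar{\mathcal P}$ (after regrouping the quadratic terms $\mathcal P\mathcal R^{-1}\mathcal P-\bar{\mathcal P}\bar{\mathcal R}^{-1}\bar{\mathcal P}$ into a part linear in $\Delta\mathcal P$ with coefficients bounded by the uniform bounds plus this remainder). One can then view this as an ISS problem for the operator-valued state $\Delta\mathcal P$ with input $\bar{\mathcal R}^{-1}-\mathcal R^{-1}$; local ISS (LISS) is enough and is obtained by the same Lyapunov mechanism, the locality being forced by the fact that the regrouping is controlled only when $\Delta\mathcal P$ is small. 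Part (iii) follows from (ii) algebraically: $\bar{\mathcal L}-\mathcal L=\bar{\mathcal P}\bar{\mathcal R}^{-1}-\mathcal P\mathcal R^{-1}=\Delta\mathcal P\,\bar{\mathcal R}^{-1}+\mathcal P(\bar{\mathcal R}^{-1}-\mathcal R^{-1})$, so $\|\bar{\mathcal L}-\mathcal L\|\le c_2\|\Delta\mathcal P\|+(\sup\|\mathcal P\|)\|\bar{\mathcal R}^{-1}-\mathcal R^{-1}\|$, and composing with the LISS bound from (ii) gives LISS of $\|\bar{\mathcal L}-\mathcal L\|$ w.r.t.\ $\|\bar{\mathcal R}^{-1}-\mathcal R^{-1}\|$.

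The main obstacle I anticipate is not the Lyapunov bookkeeping for (i) — that is a routine, if careful, computation — but rigorously justifying the manipulations in the infinite-dimensional setting: that $\mathcal P(t)$ is boundedly invertible with $\frac{d}{dt}\mathcal P^{-1}=-\mathcal P^{-1}\dot{\mathcal P}\mathcal P^{-1}$ as a genuine identity (not just formal), that $V(t,\epsilon)$ is differentiable along mild solutions, and that the unbounded operators $\mathcal A$, $\mathcal A_g$ (the latter involving the nonlocal operator $\mathscr I$) are handled on appropriate domains so the cancellation $\langle\mathcal P^{-1}\mathcal A\epsilon,\epsilon\rangle+\langle\mathcal A^{*}\mathcal P^{-1}\epsilon,\epsilon\rangle-\langle(\mathcal P^{-1}\mathcal A+\mathcal A^{*}\mathcal P^{-1})\epsilon,\epsilon\rangle=0$ is legitimate. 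Since these technical points are exactly those already addressed in \cite{curtain1975infinite} and in our earlier work \cite{zheng2020pde,zheng2021distributedmean}, I would invoke those references for the well-posedness and regularity prerequisites and present the Lyapunov argument as the substance of the proof.
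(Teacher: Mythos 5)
The paper never proves Theorem \ref{thm:stability of density filter} in this manuscript --- it is stated with the proof deferred to \cite{zheng2020pde,zheng2021distributedmean} --- and your proposal is the standard inverse-covariance Lyapunov argument ($V=\langle\mathcal{P}^{-1}(t)\epsilon,\epsilon\rangle$, cancellation of $\mathcal{A}$ via the Riccati equation, a gain split for ISS, and the Riccati-difference plus the algebraic identity $\bar{\mathcal{L}}-\mathcal{L}=(\bar{\mathcal{P}}-\mathcal{P})\bar{\mathcal{R}}^{-1}+\mathcal{P}(\bar{\mathcal{R}}^{-1}-\mathcal{R}^{-1})$ for (ii)--(iii)), which is exactly the mechanism of those cited works and is sound. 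The only repairs are bookkeeping: the drift comes out as $\dot V=-\langle\mathcal{R}^{-1}\epsilon,\epsilon\rangle+2\langle\mathcal{R}^{-1}w,\epsilon\rangle$ (a single copy of the quadratic term, since the $+\mathcal{R}^{-1}$ from $\tfrac{d}{dt}\mathcal{P}^{-1}$ partially cancels the $-2\langle\mathcal{R}^{-1}\epsilon,\epsilon\rangle$ from the gain, not the expression $-\langle\mathcal{P}^{-1}\mathcal{R}^{-1}\mathcal{P}\mathcal{P}^{-1}\epsilon,\epsilon\rangle$ you wrote), and with $\chi(r)=\tfrac{2c_2}{c_1}r$ you only obtain $\dot V\le 0$, so take $\chi(r)=\tfrac{4c_2}{c_1}r$ (or a $\theta\in(0,1)$ split as in the proof of Theorem \ref{thm:ISS}) to get strict decay --- neither change affects the stated conclusions.
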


Property (i) means that the suboptimal filter \eqref{eq:density filter} is stable even if $\mathcal{R}$ is only an approximation for $\bar{\mathcal{R}}$.
Property (ii) means that the solution of the suboptimal operator Riccati equation \eqref{eq:density Riccati} remains close to the solution of the optimal one (when $\mathcal{R}$ is replaced by $\bar{\mathcal{R}}$).
Property (iii) means that the suboptimal Kalman gain remains close to the optimal Kalman gain.
We have similar results for the gradient filter \eqref{eq:gradient filter} and \eqref{eq:gradient Riccati}.
The proof is similar to the proof of Theorem \ref{thm:stability of density filter} and thus omitted.

\begin{theorem} \label{thm:stability of gradient filter}
Assume that $\|\mathcal{Q}(t)\|$ and $\|\Bar{\mathcal{Q}}(t)\|$ are uniformly bounded, and $\exists c_3,c_4>0$ such that for $t\geq 0$,
\begin{equation}
    c_3\mathcal{I}\leq \mathcal{R}_g^{-1}(t),\Bar{\mathcal{R}}_g^{-1}(t),\mathcal{Q}^{-1}(t),\Bar{\mathcal{Q}}^{-1}(t)\leq c_4\mathcal{I}.
\end{equation}
Then we have:
\begin{itemize}
    \item[(i)] $\|\epsilon_g\|_{L^2}$ is ISS w.r.t. $\|w_g\|_{L^2}$ (and is uniformly exponentially stable if $w_g=0$);
    \item[(ii)] $\|\bar{\mathcal{Q}}-\mathcal{Q}\|$ is LISS w.r.t. $\|\Bar{\mathcal{R}}_g^{-1}-\mathcal{R}_g^{-1}\|$;
    \item[(iii)] $\|\Bar{\mathcal{L}}_g-\mathcal{L}_g\|$ is LISS w.r.t. $\|\Bar{\mathcal{R}}_g^{-1}-\mathcal{R}_g^{-1}\|$.
\end{itemize}
\end{theorem}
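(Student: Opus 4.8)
The plan is to follow the three-step strategy behind Theorem~\ref{thm:stability of density filter}, replacing the density-filter objects $(\mathcal{A},\mathcal{P},\mathcal{R},\mathcal{L},w,\epsilon)$ by their gradient-filter counterparts $(\mathcal{A}_g,\mathcal{Q},\mathcal{R}_g,\mathcal{L}_g,w_g,\epsilon_g)$ and working from the error equation~\eqref{eq:gradient estimation error equation} together with the operator Riccati equation~\eqref{eq:gradient Riccati}. For part~(i) I would take the weighted Lyapunov functional $V(t)=\langle\epsilon_g(t),\mathcal{Q}^{-1}(t)\epsilon_g(t)\rangle$, which by the coercivity hypothesis satisfies $c_3\|\epsilon_g\|_{L^2}^2\le V(t)\le c_4\|\epsilon_g\|_{L^2}^2$. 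Differentiating along~\eqref{eq:gradient estimation error equation}, using $\frac{d}{dt}\mathcal{Q}^{-1}=-\mathcal{Q}^{-1}\dot{\mathcal{Q}}\mathcal{Q}^{-1}$ and substituting~\eqref{eq:gradient Riccati}, the terms containing the (possibly unbounded, potentially destabilizing) operator $\mathcal{A}_g$ cancel exactly, leaving $\dot V=-\langle\epsilon_g,\mathcal{R}_g^{-1}\epsilon_g\rangle+2\langle\mathcal{R}_g^{-1}w_g,\epsilon_g\rangle$. Bounding the cross term by Young's inequality in the inner product induced by $\mathcal{R}_g^{-1}$ and then invoking $c_3\mathcal{I}\le\mathcal{R}_g^{-1}\le c_4\mathcal{I}$ yields $\dot V\le-\frac{c_3}{2c_4}V+2c_4\|w_g\|_{L^2}^2$, so $V$ is an ISS--Lyapunov function and part~(i) follows from Theorem~\ref{thm:(L)ISS-Lyapunov function}; taking $w_g=0$ in the same estimate gives $\dot V\le-\frac{c_3}{c_4}V$ and hence uniform exponential stability. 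Because $\mathcal{A}_g$ is an unbounded integro-differential operator, I would first carry out these manipulations for strong solutions and then extend by density, as in \cite{zheng2020pde, zheng2021distributedmean}.

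For parts~(ii) and~(iii), writing $\widetilde{\mathcal{Q}}=\bar{\mathcal{Q}}-\mathcal{Q}$ and $\widetilde{\mathcal{R}}_g^{-1}=\bar{\mathcal{R}}_g^{-1}-\mathcal{R}_g^{-1}$, subtracting the two Riccati equations and regrouping the quadratic terms leads to $\dot{\widetilde{\mathcal{Q}}}=\mathcal{A}_g^{\mathrm{cl}}\widetilde{\mathcal{Q}}+\widetilde{\mathcal{Q}}(\mathcal{A}_g^{\mathrm{cl}})^{*}-\widetilde{\mathcal{Q}}\mathcal{R}_g^{-1}\widetilde{\mathcal{Q}}-\bar{\mathcal{Q}}\widetilde{\mathcal{R}}_g^{-1}\bar{\mathcal{Q}}$, where $\mathcal{A}_g^{\mathrm{cl}}:=\mathcal{A}_g-\mathcal{Q}\mathcal{R}_g^{-1}$ is the closed-loop filter operator that part~(i) shows generates a uniformly exponentially stable evolution. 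The linear Lyapunov part is thus exponentially stable; the quadratic term $-\widetilde{\mathcal{Q}}\mathcal{R}_g^{-1}\widetilde{\mathcal{Q}}$ is higher order in $\widetilde{\mathcal{Q}}$ (and negative semidefinite, since $\widetilde{\mathcal{Q}}$ is a difference of self-adjoint covariance operators), while the forcing term has operator norm at most $\|\bar{\mathcal{Q}}\|^{2}\|\widetilde{\mathcal{R}}_g^{-1}\|$, controlled by the uniform bound on $\|\bar{\mathcal{Q}}\|$. A standard local-ISS argument---variation of constants plus Gronwall under a smallness condition on $\|\widetilde{\mathcal{Q}}(0)\|$ and $\sup_{t}\|\widetilde{\mathcal{R}}_g^{-1}\|$, or a converse Lyapunov function for the exponentially stable linear part---then gives part~(ii). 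Part~(iii) is essentially algebraic: the identity $\bar{\mathcal{L}}_g-\mathcal{L}_g=\widetilde{\mathcal{Q}}\bar{\mathcal{R}}_g^{-1}+\mathcal{Q}\widetilde{\mathcal{R}}_g^{-1}$ combined with the uniform bounds and part~(ii) yields LISS of $\|\bar{\mathcal{L}}_g-\mathcal{L}_g\|$ with respect to $\|\widetilde{\mathcal{R}}_g^{-1}\|$.

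The step I expect to be the main obstacle is part~(i): showing that $\mathcal{A}_g-\mathcal{Q}\mathcal{R}_g^{-1}$ generates a uniformly exponentially stable evolution family. Unlike $\mathcal{A}$, the generator $\mathcal{A}_g$ is built from the nonlocal inverse-gradient operator $\mathscr{I}$, so one must first verify that $\mathscr{I}$ is well defined and bounded on the relevant spaces, that $\mathcal{A}_g$ generates an evolution family on $\mathcal{H}_2$, and that the formal Riccati cancellation above is legitimate for the actual (mild) solutions. Once this functional-analytic groundwork is in place, the remaining estimates run parallel to the density-filter proof.
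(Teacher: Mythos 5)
Your proposal is correct and follows essentially the route the paper intends: the paper omits this proof, stating it is analogous to Theorem~\ref{thm:stability of density filter}, whose proof in the cited works \cite{zheng2020pde,zheng2021distributedmean} uses exactly your strategy --- the weighted functional $\langle\epsilon_g,\mathcal{Q}^{-1}\epsilon_g\rangle$ with the Riccati cancellation for part~(i), subtraction of the two operator Riccati equations for part~(ii), and the algebraic identity for the Kalman gains in part~(iii). Your added caveat about verifying boundedness of $\mathscr{I}$ and well-posedness of the evolution generated by $\mathcal{A}_g-\mathcal{Q}\mathcal{R}_g^{-1}$ is a legitimate technical point that the paper glosses over rather than a deviation from its argument.
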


\subsection{Stability of feedback interconnection}

Now we discuss the stability of the feedback interconnection of density estimation and control.
We collect equations \eqref{eq:tracking error equation}, \eqref{eq:mean field estimation error equation} and \eqref{eq:gradient estimation error equation} in the following for clarity:
\begin{align} \label{eq:interconnected system}
\begin{split}
    &\partial_t\Phi=\nabla\cdot\left(\alpha p\frac{p_*(\epsilon_g+\nabla p)-(\epsilon+p)\nabla p_*}{p_*^2}+\sigma p_*\nabla\frac{p}{p_*}\right), \\
    &\Dot{\epsilon} = [\mathcal{A}(t;\Phi,\epsilon,\epsilon_g)-\mathcal{P}(t)\mathcal{R}^{-1}(t)]\epsilon+\mathcal{P}(t)\mathcal{R}^{-1}(t)w,\\
    &\Dot{\epsilon}_g = [\mathcal{A}_g(t;\Phi,\epsilon,\epsilon_g)-\mathcal{Q}(t)\mathcal{R}_g^{-1}(t)]\epsilon_g+\mathcal{Q}(t)\mathcal{R}_g^{-1}(t)w_g,
\end{split}
\end{align}
where we write $\mathcal{A}(t;\Phi,\epsilon,\epsilon_g)$ and $\mathcal{A}_g(t;\Phi,\epsilon,\epsilon_g)$ to emphasize their dependence on $\Phi$, $\epsilon$ and $\epsilon_g$ through $v$.

A critical observation is that the ISS results we have established for $\epsilon$ and $\epsilon_g$ are valid in spite of this dependence.
This is because when designing estimation algorithms, $\mathcal{A}$ and $\mathcal{A}_g$ (and $v$) are treated as known system coefficients.
In this way, the bilinear control system \eqref{eq:FP equation} becomes a linear system.
The dependence on $\Phi$, $\epsilon$ and $\epsilon_g$ can be seen as part of the time-varying nature of $\mathcal{A}$ and $\mathcal{A}_g$.
Hence, the stability results for our density and gradient filters will not be affected.
(Nevertheless, since in the interconnection $v$ depends on $(\hat{p},\hat{q})$ and becomes stochastic, the presented filters downgrade to be the best ``linear'' filters instead of the minimum covariance filters.)
In this regard, we can treat \eqref{eq:interconnected system} as a cascade system.
By Theorem \ref{thm:(L)ISS cascade}, we have the following stability result for \eqref{eq:interconnected system}.

\begin{theorem}[Interconnected stability]
Under the assumptions in Theorems \ref{thm:stability of density filter} and \ref{thm:stability of gradient filter}, $\|\Phi\|_{L^2}$, $\|\epsilon\|_{L^2}$ and $\|\epsilon_g\|_{L^2}$ are all ISS w.r.t. $\|w\|_{L^2}$ and $\|w_g\|_{L^2}$.
\end{theorem}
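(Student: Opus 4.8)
The plan is to exhibit the interconnected system \eqref{eq:interconnected system} as a cascade of input-to-state stable subsystems and then invoke Theorem \ref{thm:(L)ISS cascade}.

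First I would treat the two estimation-error equations, i.e.\ the second and third lines of \eqref{eq:interconnected system}, as the ``upstream'' subsystem $\Sigma_1$ with state $(\epsilon,\epsilon_g)$ and external input $(w,w_g)$. The subtle point is that the generators $\mathcal{A}(t;\Phi,\epsilon,\epsilon_g)$ and $\mathcal{A}_g(t;\Phi,\epsilon,\epsilon_g)$ depend on the tracking error through $v$, so a priori $\Sigma_1$ is not decoupled from $\Phi$. However, as noted just before the statement, the Lyapunov estimates behind Theorems \ref{thm:stability of density filter}(i) and \ref{thm:stability of gradient filter}(i) use only the Riccati structure and the assumed uniform bounds on $\mathcal{P},\bar{\mathcal{P}},\mathcal{Q},\bar{\mathcal{Q}},\mathcal{R}^{-1},\mathcal{R}_g^{-1}$; the operators $\mathcal{A},\mathcal{A}_g$ cancel in those computations. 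Consequently the ISS gains from $w$ to $\epsilon$ and from $w_g$ to $\epsilon_g$ are independent of $\Phi$, and $\Sigma_1$ is ISS with respect to $(\|w\|_{L^2},\|w_g\|_{L^2})$ uniformly in $\Phi$. This is exactly where the bilinearity of \eqref{eq:FP equation} is exploited: freezing $v$ turns each error equation linear, with contraction rate dictated solely by the Kalman gain.

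Next I would identify the first line of \eqref{eq:interconnected system} with the closed-loop tracking-error equation \eqref{eq:tracking error equation}, which is \eqref{eq:FP equation} under the control \eqref{eq:modified density feedback law using estimates} with $\hat p = p+\epsilon$ and $\widehat{\nabla p}=\nabla p+\epsilon_g$. Since $\alpha>0$, Theorem \ref{thm:ISS} applies directly and gives that the ``downstream'' subsystem $\Sigma_2$ with state $\Phi$ is ISS with respect to $(\|\epsilon\|_{L^2},\|\epsilon_g\|_{L^2})$, via the ISS-Lyapunov function $V=\tfrac12\|\Phi\|_{L^2(1/p_*)}^2$. The cascade $\Sigma_1\to\Sigma_2$ then satisfies the hypotheses of Theorem \ref{thm:(L)ISS cascade}, so $\|\Phi\|_{L^2}$ is ISS with respect to $(\|w\|_{L^2},\|w_g\|_{L^2})$; and $\|\epsilon\|_{L^2},\|\epsilon_g\|_{L^2}$ are ISS with respect to the same inputs already from the first step, which finishes the argument. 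Alternatively one composes the $\mathscr{KL}$ and $\mathscr{K}$ estimates of the two subsystems by hand, with the usual $t/2$ time-splitting.

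The main obstacle is the rigorous justification of the first step: one must verify that the standing hypotheses of Theorems \ref{thm:stability of density filter} and \ref{thm:stability of gradient filter} --- uniform boundedness of the Riccati solutions and the spectral bounds on $\mathcal{R}^{-1},\mathcal{R}_g^{-1}$ --- persist when $v$ is the feedback \eqref{eq:modified density feedback law using estimates} and hence stochastic, and that the coupled PDE system remains well posed (in particular $p$ stays bounded away from $0$, which is what lets us write $\nabla(p/p_*)$ and the KDE covariance operators). A secondary caveat worth recording is that, because $v$ now depends on $(\hat p,\hat q)$, the filters \eqref{eq:density filter}--\eqref{eq:gradient Riccati} are only the best \emph{linear} filters rather than minimum-covariance ones, but this does not affect the ISS conclusion.
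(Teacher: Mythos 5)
Your proposal is correct and follows essentially the same route as the paper: the paper likewise observes that the ISS properties of the filter error dynamics (Theorems \ref{thm:stability of density filter} and \ref{thm:stability of gradient filter}) are unaffected by the dependence of $\mathcal{A},\mathcal{A}_g$ on $\Phi,\epsilon,\epsilon_g$ through $v$ --- the bilinearity lets this be absorbed as time-variation of the coefficients --- and then views \eqref{eq:interconnected system} as a cascade feeding Theorem \ref{thm:ISS}, concluding via Theorem \ref{thm:(L)ISS cascade}. Even your closing caveat (that with $v$ depending on $(\hat p,\hat q)$ the filters are only the best linear filters) is stated, parenthetically, in the paper's own discussion.
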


% \begin{proof}
% The ISS properties of $\|\epsilon\|_{L^2}$ and $\|\epsilon_g\|_{L^2}$ have been proved before, and the ISS property of $\|\Phi\|_{L^2}$ follows from the argument that the cascade connection of two ISS systems is still ISS.
% \end{proof}

This theorem has two implications.
First, the stability results for the filters are independent of the density controller.
Hence, they can be used for any control design when density feedback is required.
Second, the interconnected system is always ISS as long as the density feedback laws are designed such that the tracking error is ISS w.r.t. the $L^2$ norms of estimation errors.
Considering that norms of infinite-dimensional vectors are not equivalent in general, obtaining ISS results w.r.t. $L^2$ norms is critical.
This in turn highlights the advantage of \eqref{eq:modified density feedback law 1}, because it is difficult to obtain such an ISS result if $p$ presents in the denominator, as in \cite{eren2017velocity, elamvazhuthi2018bilinear, zheng2021transporting}.

\begin{remark}
In our recent work \cite{zheng2021distributedmean}, we proved that with some regularity assumptions on $v$, the density filter \eqref{eq:density filter} alone also produces convergent gradient estimates, which means the gradient filter may not be needed.
However, those assumptions are not satisfied in a feedback interconnected system considered in this work.
Hence, the gradient filter \eqref{eq:gradient filter} still serves as a general solution to estimate the gradient.
Determining the condition such that the density filter \eqref{eq:density filter} produces convergent gradient estimates in feedback interconnected systems is the subject of continuing research.
% The density control strategy in this work is essentially centralized because of the requirement of knowing the real-time density (and its gradient).
% This constraint can be relaxed if each robot is able to estimate the density in a distributed way.
% In \cite{zheng2021distributedmean}, we have presented a distributed density filter for this purpose, where each robot estimates the global density using only local observation and communication.
% The interconnected stability of distributed density estimation and control is left as our future work.
\end{remark}

\section{Simulation studies}
\label{section:simulation}

\begin{figure*}[t]
\setlength{\abovecaptionskip}{0.0cm}
\setlength{\belowcaptionskip}{-0.5cm}
    \centering
    \begin{subfigure}[b]{0.20\textwidth}
        \centering
        \includegraphics[width=\textwidth]{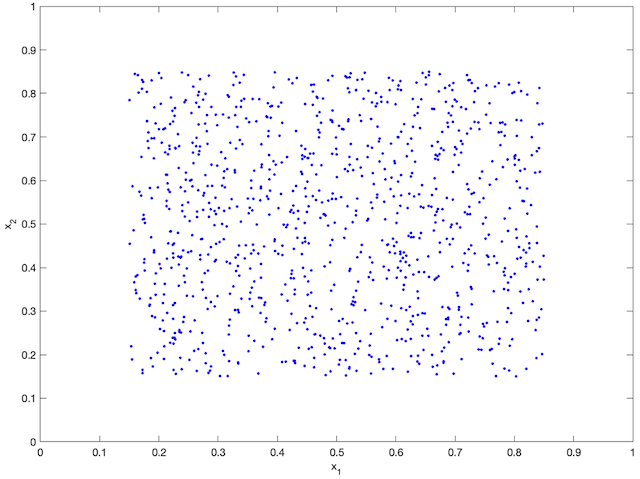}
    \end{subfigure}
    \begin{subfigure}[b]{0.20\textwidth}
        \centering
        \includegraphics[width=\textwidth]{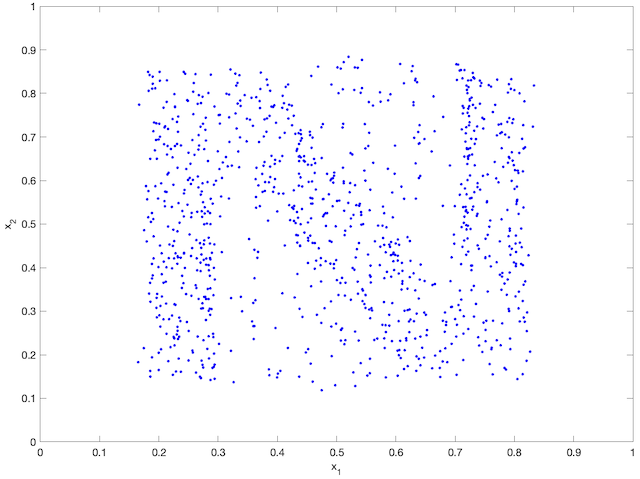}
    \end{subfigure}
    \begin{subfigure}[b]{0.20\textwidth}
        \centering
        \includegraphics[width=\textwidth]{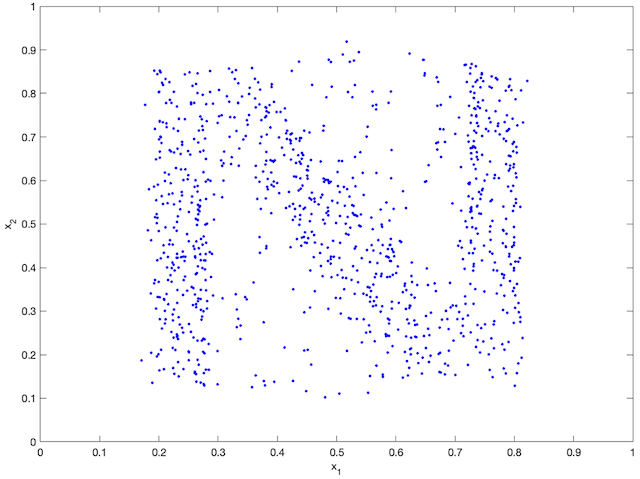}
    \end{subfigure}
    \begin{subfigure}[b]{0.20\textwidth}
        \centering
        \includegraphics[width=\textwidth]{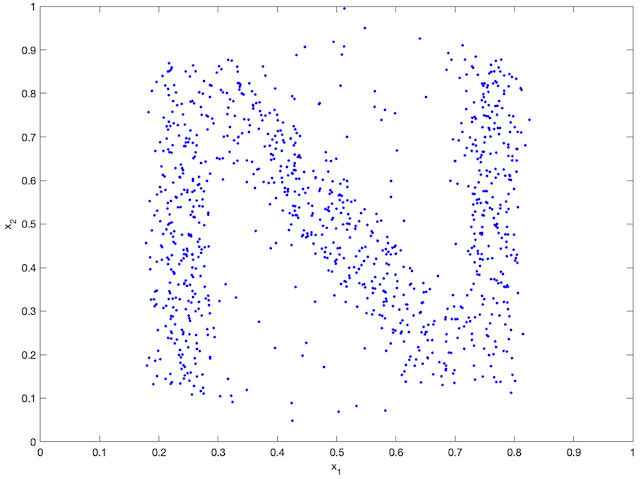}
    \end{subfigure}
    
    \begin{subfigure}[b]{0.20\textwidth}
        \centering
        \includegraphics[width=\textwidth]{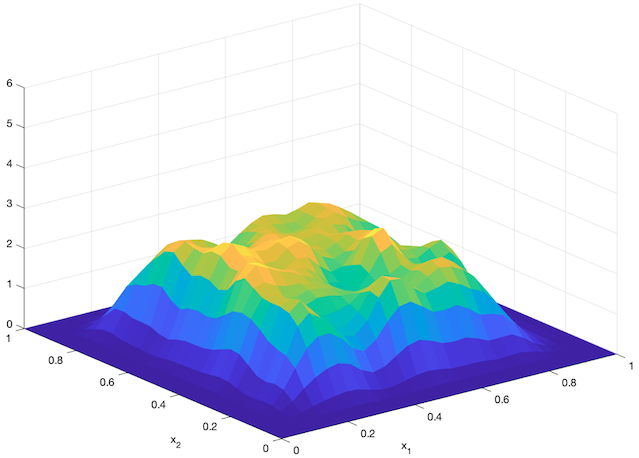}
    \end{subfigure}
    \begin{subfigure}[b]{0.20\textwidth}
        \centering
        \includegraphics[width=\textwidth]{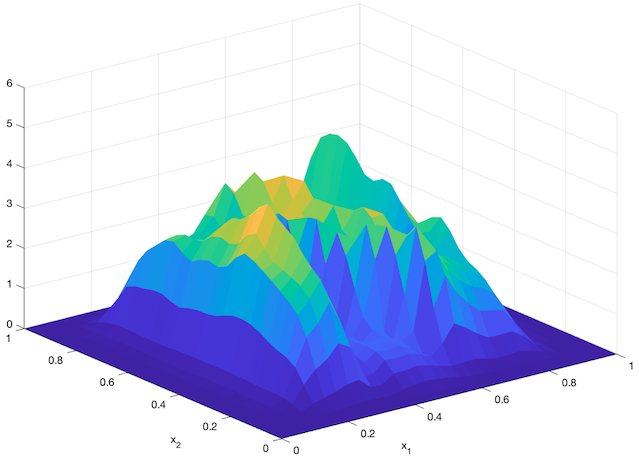}
    \end{subfigure}
    \begin{subfigure}[b]{0.20\textwidth}
        \centering
        \includegraphics[width=\textwidth]{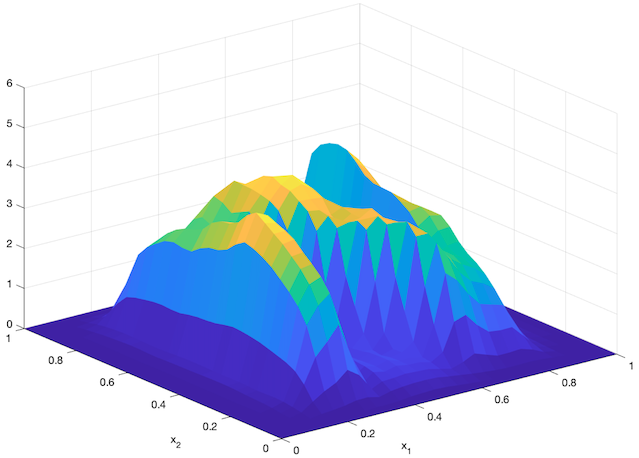}
    \end{subfigure}
    \begin{subfigure}[b]{0.20\textwidth}
        \centering
        \includegraphics[width=\textwidth]{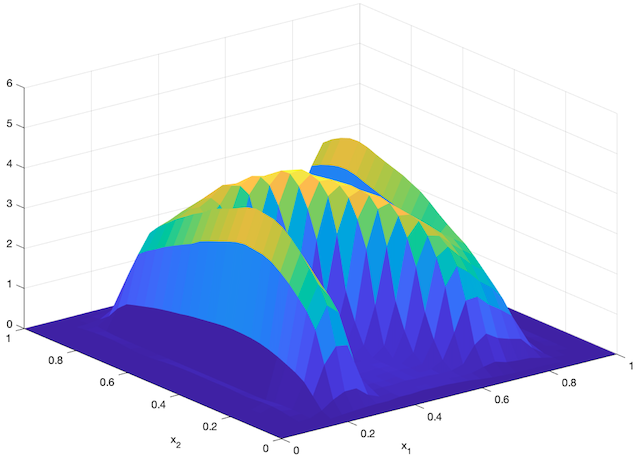}
    \end{subfigure}
    
    \begin{subfigure}[b]{0.20\textwidth}
        \centering
        \includegraphics[width=\textwidth]{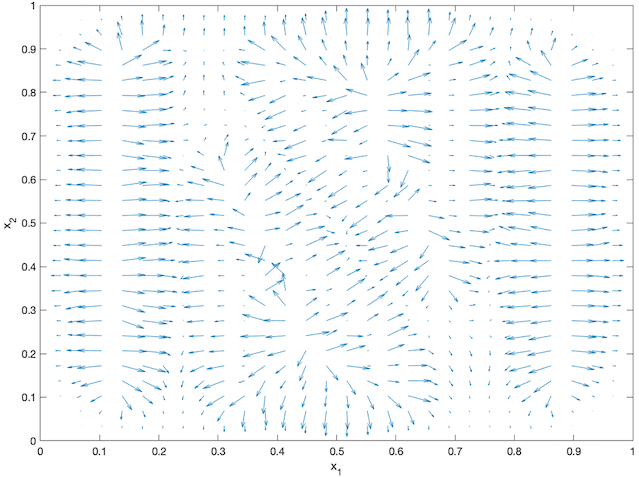}
    \end{subfigure}
    \begin{subfigure}[b]{0.20\textwidth}
        \centering
        \includegraphics[width=\textwidth]{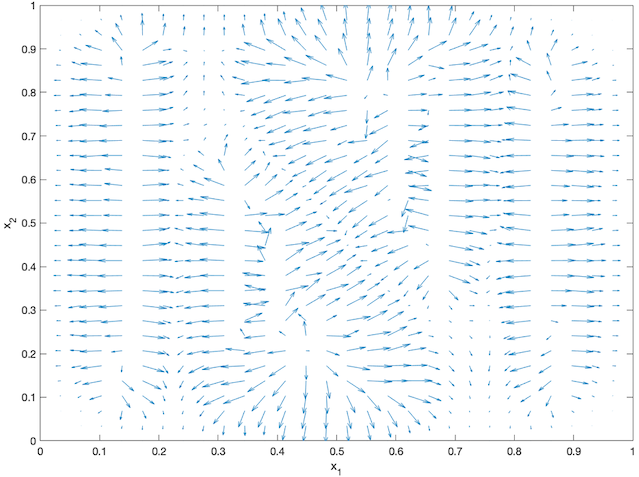}
    \end{subfigure}
    \begin{subfigure}[b]{0.20\textwidth}
        \centering
        \includegraphics[width=\textwidth]{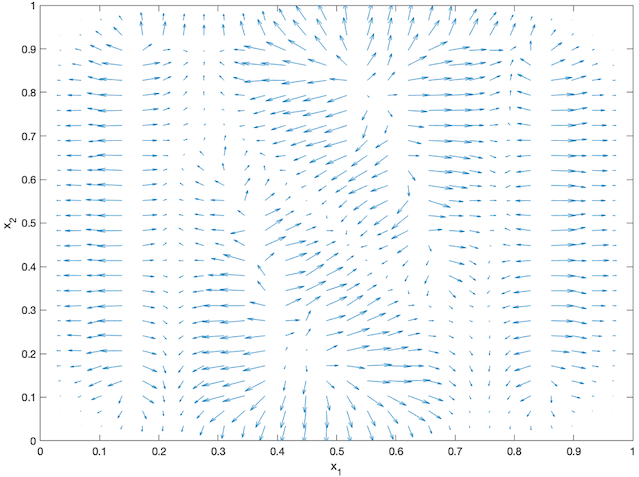}
    \end{subfigure}
    \begin{subfigure}[b]{0.20\textwidth}
        \centering
        \includegraphics[width=\textwidth]{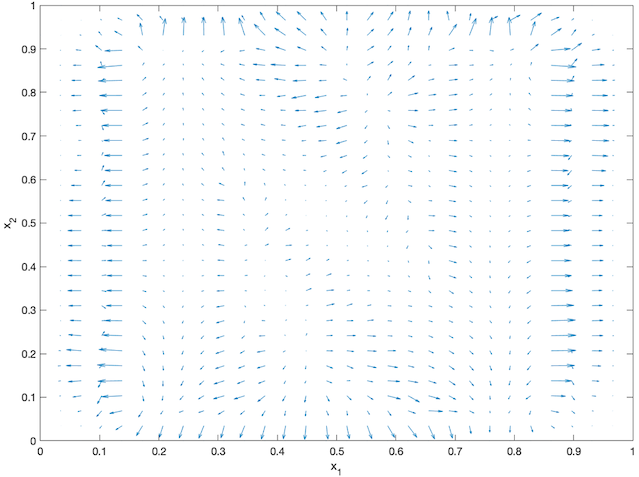}
    \end{subfigure}
    \caption{Evolution of the swarm (top), the density estimates $\Hat{p}(x,t)$ (middle) and the velocity fields ${v}(x,t)$ (bottom). Each column represents a single time step.}
    \label{fig:pdf evolution}
\end{figure*}

An agent-based simulation using 1024 robots is performed on Matlab to verify the proposed control law. 
We set $\Omega=(0,1)^2$, $\sqrt{2\sigma}=0.01$ and $\alpha=0.003$.
Each robot is simulated according to \eqref{eq:Langevin equation} where $v$ is given by \eqref{eq:modified density feedback law using estimates}. 
The initial positions are drawn from a uniform distribution on $[0.15,0.85]^2$. 
The desired density $p_*(x)$ is illustrated in Fig. \ref{fig:desired density}.
The implementation of the filters and the feedback controller is based on the finite difference method.
% That is, the densities and the operators are approximated by finite-dimensional vectors and matrices.
We discretize $\Omega$ into a $30\times30$ grid, and the time difference is $0.01s$. 
We use KDE (in which we set $h=0.04$) to obtain $p_{\text{KDE}}$ and $\nabla p_{\text{KDE}}$.
Numerical implementation of the filters are introduced in \cite{zheng2021distributedmean}.
Simulation results are given in Fig. \ref{fig:pdf evolution}.
It is seen that the swarm is able to evolve towards the desired density. 
The convergence error $\|\Hat{p}-p^*\|_{L^2(\Omega)}$ is given in Fig. \ref{fig:convergence error}, which shows that the error converges exponentially to a small neighbourhood around $0$ and remains bounded, which verifies the ISS property of the proposed algorithm. 

\begin{figure}[hbt!]
\setlength{\abovecaptionskip}{0.2cm}
\setlength{\belowcaptionskip}{-0.0cm}
    \centering
    \begin{subfigure}[b]{0.22\textwidth}
        \centering
        \includegraphics[width=\textwidth]{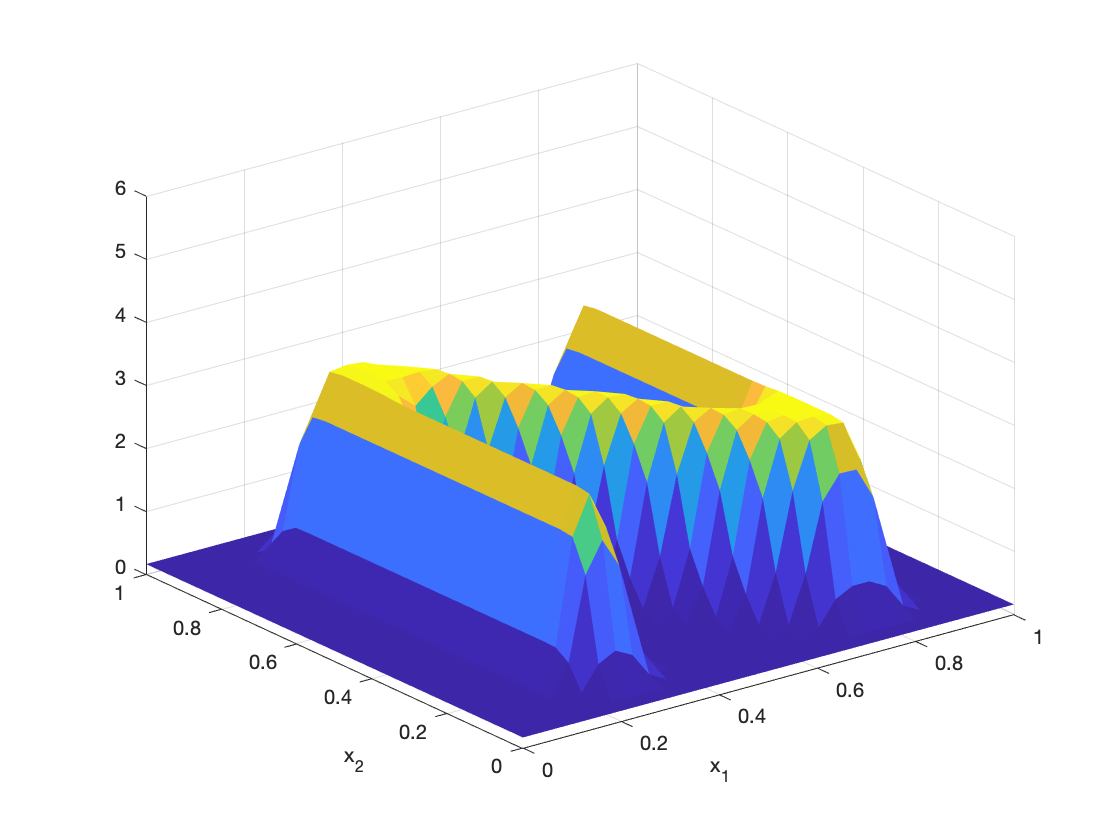}
        \caption{Target density $p_*(x)$.}
        \label{fig:desired density}
    \end{subfigure}
    \begin{subfigure}[b]{0.22\textwidth}
        \centering
        \includegraphics[width=\textwidth]{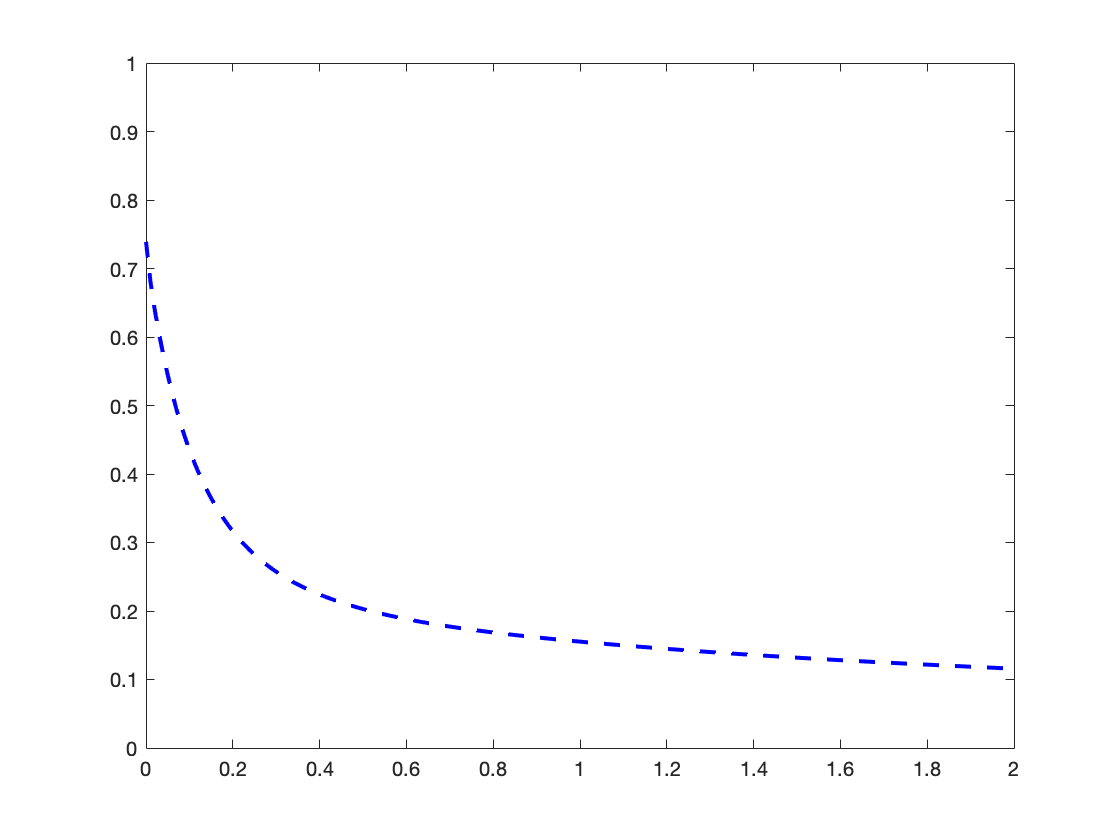}
        \caption{Convergence error.}
        \label{fig:convergence error}
    \end{subfigure}
    \caption{}
\end{figure}

\section{Conclusion}
This paper studied the interplay of density estimation and control algorithms.
We proposed new density feedback laws for robust density control of swarm robotic systems and filtering algorithms for estimating the mean-field density and its gradient.
We also proved that the interconnection of these algorithms is globally ISS.
% In implementation, we need a communication center to perform the estimation algorithms and broadcast the estimates to the robots.
% This is suitable for surveillance applications such as UAV-based environment surveillance.
Our future work is to incorporate the density control algorithm for higher-order nonlinear systems in \cite{zheng2021backstepping} and the distributed density estimation algorithm in \cite{zheng2021distributedmean}, and study their interconnected stability.

\bibliographystyle{IEEEtran}
\bibliography{References}

% Generated by IEEEtran.bst, version: 1.14 (2015/08/26)
\begin{thebibliography}{10}
\providecommand{\url}[1]{#1}
\csname url@samestyle\endcsname
\providecommand{\newblock}{\relax}
\providecommand{\bibinfo}[2]{#2}
\providecommand{\BIBentrySTDinterwordspacing}{\spaceskip=0pt\relax}
\providecommand{\BIBentryALTinterwordstretchfactor}{4}
\providecommand{\BIBentryALTinterwordspacing}{\spaceskip=\fontdimen2\font plus
\BIBentryALTinterwordstretchfactor\fontdimen3\font minus
  \fontdimen4\font\relax}
\providecommand{\BIBforeignlanguage}[2]{{%
\expandafter\ifx\csname l@#1\endcsname\relax
\typeout{** WARNING: IEEEtran.bst: No hyphenation pattern has been}%
\typeout{** loaded for the language `#1'. Using the pattern for}%
\typeout{** the default language instead.}%
\else
\language=\csname l@#1\endcsname
\fi
#2}}
\providecommand{\BIBdecl}{\relax}
\BIBdecl

\bibitem{olfati2007consensus}
R.~Olfati-Saber, J.~A. Fax, and R.~M. Murray, ``Consensus and cooperation in
  networked multi-agent systems,'' \emph{Proceedings of the IEEE}, vol.~95,
  no.~1, pp. 215--233, 2007.

\bibitem{marden2009cooperative}
J.~R. Marden, G.~Arslan, and J.~S. Shamma, ``Cooperative control and potential
  games,'' \emph{IEEE Transactions on Systems, Man, and Cybernetics, Part B
  (Cybernetics)}, vol.~39, no.~6, pp. 1393--1407, 2009.

\bibitem{huang2006large}
M.~Huang, R.~P. Malham{\'e}, P.~E. Caines \emph{et~al.}, ``Large population
  stochastic dynamic games: closed-loop mckean-vlasov systems and the nash
  certainty equivalence principle,'' \emph{Communications in Information \&
  Systems}, vol.~6, no.~3, pp. 221--252, 2006.

\bibitem{accikmecse2012markov}
B.~A{\c{c}}ikme{\c{s}}e and D.~S. Bayard, ``A markov chain approach to
  probabilistic swarm guidance,'' in \emph{2012 American Control Conference
  (ACC)}.\hskip 1em plus 0.5em minus 0.4em\relax IEEE, 2012, pp. 6300--6307.

\bibitem{bandyopadhyay2017probabilistic}
S.~Bandyopadhyay, S.-J. Chung, and F.~Y. Hadaegh, ``Probabilistic and
  distributed control of a large-scale swarm of autonomous agents,'' \emph{IEEE
  Transactions on Robotics}, vol.~33, no.~5, pp. 1103--1123, 2017.

\bibitem{elamvazhuthi2015optimal}
K.~Elamvazhuthi and S.~Berman, ``Optimal control of stochastic coverage
  strategies for robotic swarms,'' in \emph{2015 IEEE International Conference
  on Robotics and Automation (ICRA)}.\hskip 1em plus 0.5em minus 0.4em\relax
  IEEE, 2015, pp. 1822--1829.

\bibitem{chen2015optimal1}
Y.~Chen, T.~T. Georgiou, and M.~Pavon, ``Optimal steering of a linear
  stochastic system to a final probability distribution, part i,'' \emph{IEEE
  Transactions on Automatic Control}, vol.~61, no.~5, pp. 1158--1169, 2015.

\bibitem{ridderhof2019nonlinear}
J.~Ridderhof, K.~Okamoto, and P.~Tsiotras, ``Nonlinear uncertainty control with
  iterative covariance steering,'' in \emph{2019 IEEE 58th Conference on
  Decision and Control (CDC)}.\hskip 1em plus 0.5em minus 0.4em\relax IEEE,
  2019, pp. 3484--3490.

\bibitem{eren2017velocity}
U.~Eren and B.~A{\c{c}}{\i}kme{\c{s}}e, ``Velocity field generation for density
  control of swarms using heat equation and smoothing kernels,''
  \emph{IFAC-PapersOnLine}, vol.~50, no.~1, pp. 9405--9411, 2017.

\bibitem{elamvazhuthi2018bilinear}
K.~Elamvazhuthi, H.~Kuiper, M.~Kawski, and S.~Berman, ``Bilinear
  controllability of a class of advection--diffusion--reaction systems,''
  \emph{IEEE Transactions on Automatic Control}, vol.~64, no.~6, pp.
  2282--2297, 2018.

\bibitem{zheng2021transporting}
T.~Zheng, Q.~Han, and H.~Lin, ``Transporting robotic swarms via mean-field
  feedback control,'' \emph{IEEE Transactions on Automatic Control}, pp. 1--1,
  2021.

\bibitem{zheng2020pde}
------, ``Pde-based dynamic density estimation for large-scale agent systems,''
  \emph{IEEE Control Systems Letters}, vol.~5, no.~2, pp. 541--546, 2020.

\bibitem{zheng2021distributedmean}
------, ``Distributed mean-field density estimation for large-scale systems,''
  \emph{IEEE Transactions on Automatic Control}, pp. 1--1, 2021.

\bibitem{bakry2013analysis}
D.~Bakry, I.~Gentil, and M.~Ledoux, \emph{Analysis and geometry of Markov
  diffusion operators}.\hskip 1em plus 0.5em minus 0.4em\relax Springer Science
  \& Business Media, 2013, vol. 348.

\bibitem{dashkovskiy2013input}
S.~Dashkovskiy and A.~Mironchenko, ``Input-to-state stability of
  infinite-dimensional control systems,'' \emph{Mathematics of Control,
  Signals, and Systems}, vol.~25, no.~1, pp. 1--35, 2013.

\bibitem{khalil2002nonlinear}
H.~K. Khalil and J.~W. Grizzle, \emph{Nonlinear systems}.\hskip 1em plus 0.5em
  minus 0.4em\relax Prentice hall Upper Saddle River, NJ, 2002, vol.~3.

\bibitem{curtain1975infinite}
R.~F. Curtain, ``Infinite-dimensional filtering,'' \emph{SIAM Journal on
  Control}, vol.~13, no.~1, pp. 89--104, 1975.

\bibitem{zheng2021backstepping}
T.~Zheng, Q.~Han, and H.~Lin, ``Backstepping density control for large-scale
  heterogeneous nonlinear stochastic systems,'' in \emph{2022 American Control
  Conference (ACC)}, 2022.

\bibitem{lieberman1996second}
G.~M. Lieberman, \emph{Second order parabolic differential equations}.\hskip
  1em plus 0.5em minus 0.4em\relax World scientific, 1996.

\bibitem{markowich2000trend}
P.~A. Markowich and C.~Villani, ``On the trend to equilibrium for the
  fokker-planck equation: an interplay between physics and functional
  analysis,'' \emph{Mat. Contemp}, vol.~19, pp. 1--29, 2000.

\bibitem{cacoullos1966estimation}
T.~Cacoullos, ``Estimation of a multivariate density,'' \emph{Annals of the
  Institute of Statistical Mathematics}, vol.~18, no.~1, pp. 179--189, 1966.

\end{thebibliography}

\end{document}